\newtheorem{thm}{Theorem}
\newtheorem{lemma}[thm]{Lemma}
\newtheorem{prop}[thm]{Proposition}
\theoremstyle{definition}
\newtheorem{defn}[thm]{Definition}
\newtheorem{remark}[thm]{Remark}
\newcommand{\abs}[1]{\left\vert {#1} \right\vert}
\newcommand{\set}[1]{\left\{ {#1} \right\}}
\newcommand{\prt}[1]{\left( {#1} \right)}
\newcommand{\scal}[1]{\left< {#1} \right>}
\newcommand{\eq}{\ =\ }
\newcommand{\setN}{{\mathbb N}}              
\newcommand{\setR}{{\mathbb R}}
\newcommand{\setC}{{\mathbb C}}
\newcommand{\implie}{\Rightarrow}
\newcommand{\Limplie}{\ \ \Longrightarrow\ \ }
\newcommand{\precc}{\prec\!\!\prec}
\newcommand{\A}{\mathcal{A}}
\newcommand{\B}{\mathcal{B}}
\renewcommand{\H}{\mathcal{H}}
\newcommand{\J}{\mathcal{J}}
\newcommand{\M}{\mathcal{M}}
\newcommand{\T}{\mathcal{T}}
\newcommand{\C}{\mathcal{C}}
\newcommand{\Sw}{\mathcal{S}}
\DeclareMathOperator{\Span}{span}
\DeclareMathOperator{\tr}{tr}
\begin{document}

\title{\vspace{-1.1cm}{\bf An algebraic formulation of causality for noncommutative geometry}\vspace{0.5cm}}

\author{Nicolas Franco$^a$ \and Micha{\l} Eckstein$^{b,a}$\vspace{0.3cm}}

\date{{\footnotesize $^a$ Copernicus Center for Interdisciplinary Studies{\footnote{\small supported by a grant from the John Templeton Foundation}},\\
 ul. S{\l}awkowska 17, 31-016 Krak\'ow, Poland\\[0.3cm]
 $^b$ Faculty of Mathematics and Computer Science, Jagellonian University,\\
 ul. {\L}ojasiewicza 6, 30-348 Krak\'ow, Poland\\[0.3cm]
 nicolas.franco@math.unamur.be \quad michal.eckstein@uj.edu.pl}}

\maketitle

\begin{abstract}
We propose an algebraic formulation of the notion of causality for spectral triples corresponding to globally hyperbolic manifolds with a well defined noncommutative generalization. The causality is given by a specific cone of Hermitian elements respecting an algebraic condition based on the Dirac operator and a fundamental symmetry. We prove that in the commutative case the usual notion of causality is recovered. We show that, when the dimension of the manifold is even, the result can be extended in order to have an algebraic constraint suitable for a Lorentzian distance formula.
\end{abstract}

\section{Introduction}

Lorentzian noncommutative geometry is a new extension of noncommutative geometry which tries to adapt Alain Connes' theory to Lorentzian spaces. The initial formulation of noncommutative geometry \cite{C94} is only suitable for Riemannian geometry, and its main application in physics, called almost commutative geometry or noncommutative standard model, offers a description of the standard model of particle physics coupled to gravity with Euclidean signature \cite{MC08}. The aim of Lorentzian noncommutative geometry is to find a way to apply the theory of spectral triples to Lorentzian geometry, with as a long term goal the construction of an almost commutative model including gravity with the correct signature. The adaptation of noncommutative geometry to pseudo-Riemannian and Lorentzian spaces has begun in the last decade \cite{Stro,Mor}. More recently, several physicists and mathematicians have become interested by the question.\\

\vspace*{0.5cm}

The goal of this paper is to show that it is possible to include within the definition of a spectral triple a notion of causality. The main difficulty in defining causal structures on noncommutative spaces is the lack of notion of points. In particular, any definition of causality that uses curves would not be suitable for a noncommutative generalization. The first step towards a noncommutative causal structure consists in rephrasing the classical one in a purely algebraic way. So the path we shall follow in this paper is the one of algebraisation of geometry with the final aim of a definition of causality that remain valid in the noncommutative regime.

The algebraic structure presented in this paper comes from the existence of causal functions on Lorentzian spaces. This is motivated by the use of causal functions in existing attempts to define a noncommutative Lorentzian distance \cite{Mor,F2,F3} and the definition of isocones by Fabien Besnard which give a generalization of partially ordered spaces \cite{Bes}. The extension of causal functions to noncommutative spaces is given by a fully algebraic condition \mbox{$\forall \phi \in \H, \scal{\phi,\J[D,f] \phi} \leq 0$}, involving the Dirac operator $D$ and the fundamental symmetry $\J$, which ensures that the constructed isocones correspond to a Lorentzian geometry. In this paper we  prove that, when the spectral triple is constructed from a globally hyperbolic manifold, the usual notion of causality is recovered. This gives us the opportunity to extend the definition of the causal structure to noncommutative spaces and to lift the veil on what could be causality in the noncommutative regime. Then we show that, for even dimensions, the algebraic condition for causal functions can be extended to unbounded functions in order to have a suitable constraint for the construction of a Lorentzian distance formula. We give the proof that, for a spectral triple constructed from the Minkowski spacetime, the usual notion of distance can be recovered.

The plan of this paper is the following: In Section \ref{lost} we explain the basic structures for Lorentzian noncommutative geometry which are Lorentzian spectral triples. In Section \ref{caus} we put forward the basic axioms of a causal cone that induces a partial order on the space of states. We give the proof that, in the commutative case, this partial order structure corresponds to the usual causal structure between the points of a globally hyperbolic manifold. In Section \ref{causnc} we extend the structure to noncommutative spacetimes and define the notion of causal and chronological futures for the space of states. Then in Section \ref{secdist} we show that our result can be applied in order to have a suitable algebraic constraint to define a Lorentzian distance formula for even dimensional manifolds. The Appendix contains some technical computations which are essential for the proofs of the main results of this paper.\\

\vspace*{0.5cm}

\section{Lorentzian spectral triple}\label{lost}

The mathematical structures used in this paper are Lorentzian spectral triples. However, the definition of a Lorentzian spectral triple is still a work in progress with different but similar proposals. Since our considerations could be applied within the different approaches, we will just highlight the axioms which are in common and significant for our result.

\begin{defn}\label{deflost}
A Lorentzian spectral triple (minimal set of axioms) is given by the data \mbox{$(\A,\widetilde\A,\H,D,\J)$} with:
\begin{itemize}
\item A Hilbert space $\H$.
\item A non-unital pre-$C^*$-algebra $\A$ with a faithful representation as bounded operators on $\H$.
\item A preferred unitization $\widetilde\A$ of $\A$ which is a pre-$C^*$-algebra with a faithful representation as bounded operators on $\H$ and such that $\A$ is an ideal of $\widetilde\A$.
\item An unbounded operator $D$ densely defined on $\H$ such that, $\forall a\in\widetilde\A$:
\begin{itemize}
\item $[D,a]$ extends to a bounded operator on $\H$,
\item $a(1 + \scal{D}^2)^{-\frac 12}$ is compact, with $\scal{D}^2 = \frac 12 (D D^* + D^* D)$.
\end{itemize}  
\item A bounded operator $\J$ on $\H$ such that:
\begin{itemize}
\item $\J^2=1$,
\item $\J^*=\J$,
\item $[\J,a]=0\quad\forall a\in\widetilde\A$,
\item $D^*=-\J D \J$,
\item $\J$ captures the Lorentzian signature of the metric.
\end{itemize}
\end{itemize}
\end{defn}

\begin{defn}
A Lorentzian spectral triple is even if there exists a $\mathbb Z_2$-grading $\gamma$ such that $\gamma^*=\gamma$, $\gamma^2=1$, $[\gamma,a] = 0\ \forall a\in\widetilde\A$, \mbox{$\gamma \J =- \J \gamma$} and \mbox{$\gamma D =- D \gamma $}.
\end{defn}

The operator $\J$ is a fundamental symmetry which turns the Hilbert space $\H$ into a Krein space \cite{Stro,Bog} with the indefinite inner product $(\cdot,\cdot)_{\J} = \scal{\cdot,\J \cdot}$, where $\scal{\cdot,\cdot}$ is the positive definite inner product on the Hilbert space. The condition $D^*=-\J D \J$ is equivalent to the fact that $iD$ is self-adjoint for the indefinite inner product $(\cdot,\cdot)_{\J}$ (Krein-self-adjoint).

The condition that $\J$ captures the Lorentzian signature of the metric must be clarified. Without such a condition, we have a general pseudo-Riemmanian spectral triple \cite{Stro} with no real control on the signature. The construction of $\J$ varies within the different approaches:
\begin{itemize}
\item In an approach of one of the authors \cite{F4,F5}, the fundamental symmetry is constructed as $\J=-[D,\T]$ from an unbounded self-adjoint operator $\T$ with domain \mbox{$\text{Dom}(\T)\subset \H$} such that \mbox{$\prt{1+ \T^2}^{-\frac{1}{2}}\in \widetilde\A$}. Such operator $\T$ represents a global time function for the Lorentzian spectral triple.
\item In another approach initiated by Mario Paschke \cite{Pas} (see also \cite{Verch11,Rennie12}), the fundamental symmetry is a one-form $\J=\sum_i J a^0_i J^{-1} a_i [D,b_i]$ where $a^0_i,a_i,b_i \in \widetilde\A$ and where $J$ is an antilinear isometry (reality structure) compatible with the structure of the spectral triple (see \cite{MC08}).
\end{itemize}

Explicit constructions of commutative Lorentzian spectral triples can be performed given a Lorentzian manifold, typically with conditions of global hyperbolicity, completeness and existence of a spin structure. We refer the reader to \cite{Beem, Lawson} for more details on the usual concepts of Lorentzian geometry, causality and spin geometry. By a complete Lorentzian manifold we understand the following: there exists a spacelike reflection - i.e. a linear map on the tangent bundle respecting $r^2=1$, $g(r\cdot,r\cdot) = g(\cdot,\cdot)$ such that $g^r(\cdot,\cdot)=g(\cdot,r\cdot)$ is a Riemannian metric - such that the manifold is complete under the metric $g^r$ (see \cite{Stro}).

We will work with a noncompact complete globally hyperbolic Lorentzian manifold $\M$ of dimension $n$ with a spin structure $S$. Our conventions will be the following ones: the signature of the Lorentzian metric $g$ is $(-,+,+,+,\dots)$ and the Clifford action ``$c$'' respects $c(u)c(v)+c(v)c(u)=2g^{-1}(u,v)$ for $u,v\in T^*\M$. For any local basis $(x^0,\dots,x^{n-1})$, we define the curved gamma matrices $\gamma^\mu=c(dx^\mu)$ respecting the anticommutation conditions  $\{\gamma^\mu,\gamma^\nu\}=2g^{\mu\nu}$ and such that $\gamma^0$ is anti-Hermitian and $\gamma^a$ are Hermitian for $a>0$.

\begin{defn}\label{commlost}
A commutative Lorentzian spectral triple on a complete globally hyperbolic Lorentzian manifold $\M$ is constructed in the following way:
\begin{itemize}
\item $\H = L^2(\M,S)$ is the Hilbert space of square integrable spinor sections over $\M$ .
\item $D = -i(\hat c \circ \nabla^S) = -i \gamma^{\mu} \nabla^S_\mu$ is the Dirac operator.
\item $\A \subset C^\infty_0(\M)$ and $\widetilde\A \subset C^\infty_b(\M)$ with pointwise multiplication are some appropriate sub-algebras\footnote{As an example, on Minkowski spacetime, one can take $\A$ to be the space of Schwartz functions and $\widetilde\A$ the space of smooth bounded functions with bounded derivatives.} of the algebra of smooth functions vanishing at infinity and the algebra of smooth bounded functions such that $\forall a\in\widetilde\A$, $[D,a]$ extends to a bounded operator on $\H$. The representation is given by pointwise multiplication on $\H$.
\item $\J=i\gamma^0$.\\
\end{itemize}

\vspace{0.5cm}

If $n$ is even, the $\mathbb Z_2$-grading is given by the chirality element: \[\gamma = (-i)^{\frac{n}{2} + 1} \gamma^0 \dots \gamma^{n-1}.\]
\end{defn}
The choice of the fundamental symmetry $\J = i\gamma^0$ guarantees that the operator $iD$ is Krein-self-adjoint and that a commutative Lorentzian spectral triple is a Lorentzian spectral triple in the sense of the Definition \ref{deflost} \cite{Stro}. We must remark that, if we want to respect the definition of a fundamental symmetry in \cite{Stro} (and our exact axioms in the Definition \ref{deflost}), we have to require that $\J^2 = -(\gamma^0)^2 = -g^{00} = 1$ (so require $\gamma^0$ to be the flat gamma matrix). This can be obtained for every globally hyperbolic Lorentzian manifold by a conformal transformation of the metric in order to get $g^{00} = -1$. The causal structure of $\M$ is completely independent of such conformal transformation. However, the results presented in this paper are still valid for a general curved operator $\J = i\gamma^0$ with $(\gamma^0)^2 = g^{00} < 0$, so we will keep the curved notation throughout the paper.

For the reader's convenience, we give an example of a noncommutative Lorentzian spectral triple. The construction uses the Moyal product on the Minkowski spacetime \cite{F5} (see \cite{Gayral} for the Riemannian version). The Lorentzian spectral triple is constructed in the following way:

\begin{itemize}
\item $\H = L^2(\setR^{1,n-1}) \otimes \setC^{2^{\lfloor{n/2}\rfloor}}$ is the Hilbert space of square integrable spinor sections over the Minkowski spacetime.
\item $\A = \prt{\Sw(\setR^{1,n-1}),\star}$ is the space of Schwartz functions endowed with the Moyal product:
\[
(f \star h) (x) = \frac{1}{(2\pi)^n}\int \int f(x-\frac 12 \Theta u)\ h(x+v)\ e^{-iu \cdot v} \;d^nu\;d^nv,
\]
where $\Theta$ is a real skewsymmetric $n \times n$ constant matrix.
\item $\widetilde\A = \prt{\B(\setR^{1,n-1}),\star}$ is the space of smooth bounded functions with bounded derivatives endowed with the Moyal product.
\item $D = -i \gamma^\mu\partial_\mu$ is the flat Dirac operator on $\H$.
\item $\J=-[D,x^0] = ic(dx^0)  = i\gamma^0$ where $x^0$ is the global time.
\end{itemize}

The action of $\A$, $\widetilde\A$ and $D$ on $\H$ is given by the Moyal left multiplication. $\A$ and $\widetilde\A$ are pre-C$^*$-algebras if equipped with the operator norm \cite{Gayral} and $\A$ is an ideal of $\widetilde\A$. Another example of  noncompact noncommutative Lorentzian spectral triple is given by the noncommutative Lorentzian cylinder \cite{Suij}.\\

\vspace{0.5cm}

\section{Lorentzian spectral triples and causality}\label{caus}

In this section, we present a construction based on the definition of a particular subset of Hermitian elements of the algebra, called causal cone, which induces a partial order relation on the space of states. We prove that for a commutative Lorentzian spectral triple one recovers the usual notion of causality.

Let $(\A,\widetilde\A,\H,D,\J)$ be a Lorentzian spectral triple with the minimal set of axioms as in the Definition \ref{deflost}. 

\begin{defn}\label{causcone}
A causal cone $\C$ is a subset of elements in $\widetilde\A$ such that:
\begin{enumerate}
\item[$(a)$] $\forall a\in\C,\quad a^*=a$ 
\item[$(b)$] $\forall a,b\in\C,\quad a+b\in\C$
\item[$(c)$] $\forall a\in\C,\forall \lambda\geq0,\quad \lambda a\in\C$ 
\item[$(d)$] $\forall x\in\setR,\quad x1\in\C$ 
\item[$(e)$] $\overline{\Span_{\setC}(\C)} = \overline{\widetilde\A}$
\item[$(f)$] $\forall a\in\C,\forall \phi \in \H, \quad\scal{\phi,\J[D,a] \phi} \leq 0$
\end{enumerate}
where the closure denotes the $C^*$-algebra completion.
\end{defn}

A causal cone can be seen as a dense subset of an isocone as defined in \cite{Bes}. It is proved that isocones are equivalent, in a category theoretical sense, to partial order sets with topological structure (ordered topological spaces) where the order is completely determined by the elements of the isocone. So an isocone is a good start in order to define causality, even if the category of ordered topological spaces contains more objects than the category of causal spaces. To be more precise, the axioms of isocones are the axioms $(a)$ to $(e)$ with the requirement that the set is closed, plus an additional axiom $(f')$ which gives a kind of lattice structure:
\begin{enumerate}
\item[$(f')$] $\forall a,b\in\C,\quad a\vee b\in\C \ \text{ and }\  a\wedge b\in\C$,\\
where $a\vee b = \frac{a+b}{2} + \frac{\abs{a-b}}{2}$ and $a\wedge b = \frac{a+b}{2} - \frac{\abs{a-b}}{2}$ with $\abs{a}=\sqrt{a^*a}$.
\end{enumerate}
However, this axiom is not necessarily respected for a dense subset of an isocone, and that is why we cannot require it in our definition. As we will show, the new axiom $(f)$ refines the definition of isocones in such a way that it corresponds to a causal structure on the underlying space. This construction is also somehow related to the definition of causal cones in \cite{Mor}.

If a causal cone exists, then it defines a partial order on the space of states. We recall that states on $\widetilde\A$ are positive linear functionals (automatically continuous) of norm one (more precisely, states are defined on the $C^*$-completion $\overline{\widetilde\A}$ and we consider the restriction of those states on $\widetilde\A$). The space of states is denoted by $S(\widetilde\A)$. It is a closed convex set (for the weak-$^*$ topology), and extremal points are called pure states, with the set of pure states denoted by $P(\widetilde\A)$. We can define a partial order on $S(\widetilde\A)$ (and \textit{a fortiori} on $P(\widetilde\A)$) in the following way:

\begin{defn}\label{defcc}
Let $\C$ be a causal cone. For every two  states $\chi,\xi\in S(\widetilde\A)$, we define
\[\chi \preceq \xi \quad \text{ iff }\quad \forall a\in\C,\ \chi(a) \leq \xi(a).\]
\end{defn}

This definition is coherent: since states are positive linear functionals, their values on Hermitian elements must be real, so the inequality is well defined. We might consider only maximal cones to avoid possible dependence on the choice of the cone.

\begin{prop}
The relation $\preceq$ defines a partial order relation on $S(\widetilde\A)$.
\end{prop}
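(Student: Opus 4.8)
The plan is to verify the three defining properties of a partial order---reflexivity, antisymmetry, and transitivity---for the relation $\preceq$. Two of these are essentially immediate and rely only on the order structure of $\setR$. For reflexivity, note that for any state $\chi$ and any $a\in\C$ we trivially have $\chi(a)\leq\chi(a)$, so $\chi\preceq\chi$. For transitivity, if $\chi\preceq\xi$ and $\xi\preceq\zeta$, then for every $a\in\C$ we have $\chi(a)\leq\xi(a)\leq\zeta(a)$, hence $\chi(a)\leq\zeta(a)$ and $\chi\preceq\zeta$; these inequalities are meaningful because $a$ is Hermitian by axiom $(a)$, so the state values are real, as already noted after Definition \ref{defcc}.

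The substantive point is antisymmetry, and this is where axiom $(e)$ enters. Suppose $\chi\preceq\xi$ and $\xi\preceq\chi$. Combining the two families of inequalities gives $\chi(a)=\xi(a)$ for every $a\in\C$. The goal is then to upgrade this equality on the cone to the equality $\chi=\xi$ of states on all of $\widetilde\A$. First I would use linearity: since $\chi$ and $\xi$ agree on $\C$, they agree on every finite complex linear combination of elements of $\C$, i.e.\ on $\Span_{\setC}(\C)$. Then I would invoke continuity of states: each state extends to a bounded (norm-one, positive) linear functional on the $C^*$-completion $\overline{\widetilde\A}$, so the subset of $\overline{\widetilde\A}$ on which $\chi$ and $\xi$ coincide is closed. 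By axiom $(e)$, $\Span_{\setC}(\C)$ is dense in $\overline{\widetilde\A}$, so this closed set must be all of $\overline{\widetilde\A}$; restricting back, $\chi=\xi$ on $\widetilde\A$. This establishes antisymmetry and completes the verification.

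The only delicate step---and the reason the spanning condition $(e)$ is built into the definition of a causal cone---is precisely this density-plus-continuity argument: without it, a cone need not separate distinct states, and $\preceq$ would in general be only a preorder rather than a genuine partial order. Everything else in the proof is formal.
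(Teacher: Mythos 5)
Your proof is correct and follows the same route as the paper: reflexivity and transitivity are immediate, and antisymmetry is obtained from $\chi=\xi$ on $\C$ via linearity, continuity of states, and the density axiom $(e)$. You merely spell out the density-plus-continuity step in more detail than the paper does.
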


\begin{proof}
The relation $\preceq$ is trivially reflexive and transitive. To check the antisymmetry, suppose that $\chi$ and $\xi$ are such that $\chi \preceq \xi$ and $\xi \preceq \chi$. In this case, we have that $\forall a\in\C,\ \chi(a) = \xi(a)$. Since $\overline{\Span_{\setC}(\C)} = \overline{\widetilde\A}$, by linearity and continuity of the states we have that $\forall a\in\widetilde\A,\ \chi(a) = \xi(a)$, so $\chi = \xi$.
\end{proof} 

Let us suppose now that the Lorentzian spectral triple is commutative and constructed as in the Definition \ref{commlost}. In this case the space of pure states corresponds to the space of characters, i.e.~the set of all non-zero *-homomorphisms. By the Gel'fand--Naimark theorem, the set of characters (spectrum) $\Delta(\A) = P(\A)$ can be identified with the manifold by $\forall p \in \M$, $p \leadsto \chi \in \Delta(\A)$ such that $ \forall f\in\A, \chi(f)=f(p)$. Pure states on $\A$ can easily be extended to pure states on $\widetilde\A$ using the Hahn--Banach theorem \cite{BraRo}. However, the space $P(\widetilde\A)$ contains too many states. In particular, it contains states whose kernel contains the sub-algebra $\A$. Such states correspond to a compactification of the manifold $\M$ (Stone--\v Cech compactification \cite{Wegge}), and they should be removed in order to recover the usual causality relation on $\M$.

The following is the main result of the paper. It ensures that the relation $\preceq$ on the space of states, as set in the Definition \ref{defcc}, is an algebraization of the usual notion of causality.\\

\begin{thm}\label{mainthm}
Let $(\A,\widetilde\A,\H,D,\J)$ be a commutative Lorentzian spectral triple constructed from a globally hyperbolic Lorentzian manifold $\M$ as in the Definition \ref{commlost}, and let us define the following subset:
\[
\M(\widetilde\A) = \set{\chi\in P(\widetilde\A)  : \A  \not\subset  \ker \chi} \subset S(\widetilde\A).
\]
 Then, 
 \[
\M(\widetilde\A) \cong \Delta(\A) \cong \M,
\]
and the partial order relation $\preceq$ on $S(\widetilde\A)$ restricted to $\M(\widetilde\A)$ corresponds to the usual causal relation on $\M$. \\
\end{thm}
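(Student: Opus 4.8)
The plan is to split the statement into the set-level identification $\M(\widetilde\A)\cong\Delta(\A)\cong\M$ and the identification of the order $\preceq$ with the causal order, and to reduce the latter to a pointwise Clifford-algebra computation combined with a global input from Lorentzian geometry.

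First I would settle the set-level identification. By the Gel'fand--Naimark theorem the spectrum of the commutative pre-$C^*$-algebra $\A$ agrees with that of its completion $C_0(\M)$, giving $\Delta(\A)\cong\M$. To obtain $\M(\widetilde\A)\cong\Delta(\A)$ I would exploit that $\A$ is an ideal of $\widetilde\A$: the restriction of a character $\chi\in P(\widetilde\A)$ to $\A$ is either zero or again a character, and by definition $\M(\widetilde\A)$ consists precisely of those $\chi$ with $\A\not\subset\ker\chi$, so $\chi\mapsto\chi|_\A$ lands in $\Delta(\A)$. Conversely, evaluation at $p\in\M$ is a character of $\A$ whose extension to $\widetilde\A$ is forced to be unique: choosing $a_0\in\A$ with $a_0(p)\neq0$, the ideal relation gives $\chi(b)\,a_0(p)=\chi(b a_0)=b(p)a_0(p)$, hence $\chi(b)=b(p)$ for every $b\in\widetilde\A$. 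These two maps are mutually inverse and continuous for the weak-$^*$ topology, yielding the homeomorphisms.

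The core of the proof is to compute the maximal causal cone $\C$. For a Hermitian element $a=a^*\in\widetilde\A$, i.e.\ a real function, the spinorial covariant derivative acting on the scalar reduces to $\partial_\mu a$, so that $[D,a]=-i\gamma^\mu\partial_\mu a$ and $\J[D,a]=\gamma^0\gamma^\mu\partial_\mu a$. The Clifford relations together with the Hermiticity properties of the $\gamma^\mu$ show that the endomorphism $N(x)=\gamma^0\gamma^\mu\partial_\mu a(x)$ of the spinor space is Hermitian, so axiom $(f)$, which must hold for all $\phi\in\H$, is equivalent to $N(x)\le0$ pointwise on $\M$. Squaring $N$ with $\{\gamma^\mu,\gamma^\nu\}=2g^{\mu\nu}$ produces the scalar relation $N^2=2(da)^0\,N-g^{00}\,g^{-1}(da,da)\,\mathrm{I}$ (this and the associated eigenvalue bookkeeping are the technical computations I would relegate to the Appendix), from which the two eigenvalues of $N$ are explicit. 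Inspecting their signs shows that $N(x)\le0$ holds if and only if the gradient $\nabla a$ is past-directed causal at $x$; hence $\C$ is exactly the set of real functions of $\widetilde\A$ that are non-decreasing along every future-directed causal curve, that is, the causal functions.

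With the cone identified, the forward implication is immediate: if $q\in J^+(p)$, a future-directed causal curve joins $p$ to $q$, and any $a\in\C$ is non-decreasing along it, so $a(p)\le a(q)$, i.e.\ $\chi_p\preceq\chi_q$. The converse is the main obstacle and is where global hyperbolicity is essential. Arguing by contraposition, for $q\notin J^+(p)$ I must exhibit a single $a\in\C$ with $a(p)>a(q)$. I would invoke the fact that in a globally hyperbolic spacetime the causal order is detected by smooth causal functions, building from a Cauchy temporal function (in the sense of Bernal--S\'anchez) a smooth function with everywhere past-directed causal gradient that strictly separates $p$ from $q$. The delicate point is to keep this separating function inside $\widetilde\A\subset C_b^\infty(\M)$: I would post-compose with a bounded strictly increasing smooth map, which preserves the past-directed causal gradient ($\nabla(h\circ a)=h'(a)\nabla a$ with $h'>0$) while enforcing boundedness of the function and its derivatives. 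Combining the two implications gives $\chi_p\preceq\chi_q\iff q\in J^+(p)$, which is the asserted correspondence.
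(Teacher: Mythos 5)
Your decomposition and most of your steps mirror the paper's own proof: the identification $\M(\widetilde\A)\cong\Delta(\A)\cong\M$ via the ideal property and the forced unique extension of characters is exactly the paper's argument, and the two-sided comparison of $\preceq$ with the causal order --- monotonicity of causal functions along causal curves in one direction, separation of non-causally-related points by smooth Bernal--S\'anchez time functions composed with a bounded increasing map in the other --- is precisely Proposition~\ref{cfdo}. Where you genuinely depart from the paper is in the pointwise linear algebra behind Theorem~\ref{causalfthm}. You use the matrix identity $N^2=2\alpha N-\beta$ for $N=\gamma^0\gamma^\mu f_{,\mu}$ (obtained by anticommuting one $\gamma^0$ through and symmetrizing) to read off the spectrum $\set{\alpha\pm\sqrt{\alpha^2-\beta}}$ and conclude $N\le 0\Leftrightarrow(\alpha\le0,\ \beta\ge0)$ in two lines. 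The paper instead computes all traces $\tr(N^k)$ (Lemma~\ref{tmtr}), converts them into every coefficient of the characteristic polynomial via Newton's identities and a combinatorial identity (Lemma~\ref{coefappendix}), and invokes Vieta; your identity is the matrix-level fact underlying the paper's trace recurrence $a_k=2\alpha a_{k-1}-\beta a_{k-2}$, it makes the case $n\ge 4$ no harder than $n=2,3$, and it extends verbatim to $\J([D,f]+i\gamma)$ in Section~\ref{secdist} by adjoining $\gamma^n=\gamma$ as the paper does. This is a genuine and welcome simplification of the technical core.

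The one real gap is axiom $(e)$. You compute what the maximal cone must be --- the causal functions in $\widetilde\A$ --- but never verify that this set \emph{is} a causal cone, i.e., that $\overline{\Span_{\setC}(\C)}=\overline{\widetilde\A}$. This is not a formality: it is exactly the condition that fails on non-causal spacetimes, and without it the relation $\preceq$ induced by your cone need not be antisymmetric, so the "partial order" in the statement is not yet established. The paper handles this inside Proposition~\ref{cfdo} by showing that $\Span_{\setR}(\C(\M))$ is an algebra (every element is a difference of positive causal functions, and products of positive causal functions are causal) and applying Stone--Weierstrass on a Nachbin-type ordered compactification. Two smaller points deserve a sentence each: the Hermiticity of $N$ that your eigenvalue argument needs holds in coordinates adapted to the globally hyperbolic splitting (where $g^{0a}=0$), and is in any case forced by the reality of $\scal{\phi,\J[D,f]\phi}$; and post-composing with a bounded increasing $h$ bounds $h\circ a$ but not automatically its derivatives, since $h'(a)\nabla a$ can still blow up, so you also need the paper's observation that a causal function can be modified at infinity to have bounded derivatives without changing it near the two points being separated.
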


We will devote the rest of this section to the proof of the Theorem \ref{mainthm}.\\

First, we need to show that the subset $\M(\widetilde\A)$ corresponds to the spectrum $\Delta(\A)$. From the definition of $\M(\widetilde\A)$, every character of $\M(\widetilde\A)$ is still a non-zero *-homomorphism if restricted to $\A$, so it is sufficient to show that every character on $\A$ extends uniquely to $\widetilde\A$ to get the bijection.

\begin{prop}
Let $\chi\in\Delta(\A)$ be a character. Then $\chi$ has a unique extension on $\widetilde\A$.
\end{prop}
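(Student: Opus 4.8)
The plan is to show that a character $\chi \in \Delta(\A)$ extends to $\widetilde\A$ in a unique way, exploiting the ideal structure $\A \trianglelefteq \widetilde\A$ together with the Gel'fand picture identifying $\Delta(\A)$ with $\M$. First I would fix $\chi \in \Delta(\A)$ and recall that, by the Gel'fand--Naimark identification stated in the excerpt, there is a point $p \in \M$ with $\chi(f) = f(p)$ for all $f \in \A$. Since $\A$ is an ideal in $\widetilde\A$, for any $a \in \widetilde\A$ and any $f \in \A$ the product $af$ again lies in $\A$, and this is the lever for defining the extension: the value of $a$ at $p$ is forced by how $a$ multiplies elements of $\A$.

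The key algebraic step is the following. Choose some $f \in \A$ with $\chi(f) \neq 0$; such an $f$ exists precisely because $\chi$ is a nonzero homomorphism on $\A$, i.e.\ because the extended state sits in $\M(\widetilde\A)$ rather than having $\A \subset \ker \chi$. For $a \in \widetilde\A$, I would \emph{define}
\[
\widetilde\chi(a) \eq \frac{\chi(af)}{\chi(f)},
\]
noting that $af \in \A$ so the numerator makes sense. I would then verify that this does not depend on the choice of $f$: if $g \in \A$ also satisfies $\chi(g) \neq 0$, then using commutativity and multiplicativity of $\chi$ on $\A$ one gets $\chi(af)\chi(g) = \chi(afg) = \chi(ag)\chi(f)$, whence $\chi(af)/\chi(f) = \chi(ag)/\chi(g)$. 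In the Gel'fand picture this simply says $\widetilde\chi(a) = a(p)$, the pointwise value of the bounded function $a$ at $p$, which is manifestly well defined and independent of the auxiliary function.

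Next I would check that $\widetilde\chi$ is genuinely a character of $\widetilde\A$ extending $\chi$. Linearity is immediate from linearity of $\chi$. For multiplicativity, for $a,b \in \widetilde\A$ one computes $\widetilde\chi(ab)\,\chi(f)^2 = \chi(abf)\chi(f) = \chi(af)\chi(bf)$ (again using that $af, bf \in \A$ and $\chi$ is multiplicative on $\A$, after inserting an extra factor of $f$), giving $\widetilde\chi(ab) = \widetilde\chi(a)\widetilde\chi(b)$. That $\widetilde\chi$ restricts to $\chi$ on $\A$ follows by taking $a \in \A$ and using $\chi(af) = \chi(a)\chi(f)$. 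The $*$-compatibility is inherited from $\chi$. For \textbf{uniqueness}, suppose $\psi$ is any character of $\widetilde\A$ with $\psi|_{\A} = \chi$. For every $a \in \widetilde\A$ and the fixed $f$, multiplicativity forces $\psi(a)\chi(f) = \psi(a)\psi(f) = \psi(af) = \chi(af)$, since $af \in \A$; dividing by $\chi(f) \neq 0$ recovers exactly the formula for $\widetilde\chi(a)$, so $\psi = \widetilde\chi$.

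The one point requiring genuine care, and what I expect to be the main obstacle, is continuity: a priori a character of the dense pre-$C^*$-subalgebra $\widetilde\A$ need only be a $*$-homomorphism, and to legitimately call it a \emph{character} (and later extend it to the $C^*$-completion $\overline{\widetilde\A}$ so that the state language of Definition \ref{defcc} applies) one should confirm it is bounded of norm one. Here I would invoke the standard fact that any nonzero $*$-homomorphism from a unital (pre-)$C^*$-algebra to $\setC$ is automatically continuous with norm one, so that $\widetilde\chi$ extends continuously to $\overline{\widetilde\A}$; the uniqueness of that continuous extension is then automatic by density. This completes the identification $\M(\widetilde\A) \cong \Delta(\A)$ promised before the proposition, since the extension map and the restriction map are mutually inverse bijections.
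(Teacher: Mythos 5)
Your argument is correct and rests on exactly the same key identity as the paper's proof: for $f\in\A$ with $\chi(f)\neq 0$ and $a\in\widetilde\A$, the ideal property gives $af\in\A$ and multiplicativity forces $\chi(a)=\chi(af)/\chi(f)$, which is precisely how the paper establishes uniqueness. The only difference is that you build the extension explicitly and verify well-definedness, multiplicativity and continuity by hand, where the paper delegates existence to a citation of Bratteli--Robinson; your checks are sound, and the continuity point you flag is indeed covered by automatic continuity of characters on unital pre-$C^*$-algebras.
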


\begin{proof}
The extension of $\chi$ can be chosen in such a way that the character (pure state) property still holds  on $\widetilde\A$ \cite[Prop. 2.3.24]{BraRo}. Since $\A \not\subset \ker \chi$, there exists $a\in\A$ such that $\chi(a) \neq 0$. Then for every $b\in\widetilde\A$, we have:
\[
\chi(ab) = \chi(a) \chi(b)  \Limplie \chi(b)= \frac{\chi(ab)}{\chi(a)}.
\]
Since $\A$ is an ideal of $\widetilde\A$, $ab\in\A$, the values of $\chi$ are uniquely determined by the values of its restriction to $\A$.
\end{proof}

The cornerstone of the proof of Theorem \ref{mainthm} is the set of causal functions. Let us recall the following definition \cite{Mor}:

\begin{defn}
A causal function on a Lorentzian manifold is a real-valued function which is non-decreasing along every future directed causal curve.
\end{defn}

\begin{prop}\label{cfdo}
Let $\M$ be a globally hyperbolic Lorentzian manifold, then the set of smooth bounded causal functions $\C(\M)\subset  C^\infty_b(\M)$ completely determines the causal structure on $\M$ by
\[\forall p,q\in\M,\qquad p \preceq q \quad \text{ iff }\quad \forall f\in\C(\M),\ f(p) \leq f(q).\]
Moreover, $\C(\M)$ respects the axioms $(a)$ to $(e)$ of a causal cone for some suitable unitization $\widetilde\A\subset  C^\infty_b(\M)$.
\end{prop}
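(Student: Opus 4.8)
The plan is to establish the two assertions in turn: first that the smooth bounded causal functions order-determine the causal relation, and then that $\C(\M)$ satisfies axioms $(a)$--$(e)$. For the displayed equivalence, the forward implication is immediate: if $p \preceq q$ there is a future-directed causal curve $\gamma$ from $p$ to $q$ (or $p=q$), and by definition every $f\in\C(\M)$ is non-decreasing along $\gamma$, hence $f(p)\le f(q)$. The substance is the converse, which I would prove by contraposition: assuming $q\notin J^+(p)$, I must produce a smooth bounded causal function $f$ with $f(p)>f(q)$. Here I would use global hyperbolicity twice. First, the sets $J^\pm$ are closed and $\M$ admits a smooth Cauchy temporal function via the Bernal--S\'anchez splitting $\M\cong\setR\times\Sigma$; any such temporal function $\tau$, whose gradient $\nabla\tau$ is past-directed timelike, is causal and strictly increasing along future-directed causal curves. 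If $p\in J^+(q)$ with $p\neq q$, then any such $\tau$ already satisfies $\tau(p)>\tau(q)$. The delicate case is when $p$ and $q$ are spacelike separated: there I would select a spacelike Cauchy hypersurface $S$ through (or just to the future of) $q$ with $p$ strictly in its chronological future $I^+(S)$ --- possible since $p\notin J^-(q)$ --- and take a Cauchy temporal function $\tau$ having $S$ as a level set, so that $\tau(q)\le c<\tau(p)$. Finally, composing with a smooth bounded strictly increasing map $\psi\colon\setR\to\setR$ (such as $\arctan$) replaces $\tau$ by a bounded causal function $\psi\circ\tau\in C^\infty_b(\M)$ preserving the strict inequality.

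The main obstacle is precisely the separation of spacelike-separated points: exhibiting a single temporal function that is strictly ``later'' at $p$ than at $q$ relies on the flexibility of the smooth Cauchy-surface constructions (choosing a spacelike Cauchy surface through one point with the other strictly to its future), and this is where global hyperbolicity is indispensable. I expect this to be the technical heart of the argument, likely supported by the Appendix computations; the remaining steps are comparatively routine.

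For the cone axioms, $(a)$ holds since causal functions are real-valued by definition; $(b)$ and $(c)$ follow because the past-directed causal vectors form a convex cone, so $f+g$ and $\lambda f$ (for $\lambda\ge 0$) remain non-decreasing along future-directed causal curves; and $(d)$ holds because constants have vanishing gradient and are therefore causal, once $\widetilde\A$ is taken unital. For $(e)$ I would first observe that $\Span_\setR \C(\M)$ is in fact a subalgebra: if $f,g$ are bounded and causal, adding suitable constants makes $f+c,g+d\ge 0$, and the product of two non-negative causal functions is causal since $\nabla((f+c)(g+d))=(f+c)\nabla g+(g+d)\nabla f$ is a non-negative combination of past-directed causal vectors, hence past-directed causal; expanding then gives $fg\in\Span_\setR \C(\M)$. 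Thus $\Span_\setC \C(\M)$ is a self-adjoint subalgebra of $\widetilde\A$ containing the constants, and by the first part it separates the points of $\M$. The Stone--Weierstrass theorem then yields $\overline{\Span_\setC(\C(\M))}=\overline{\widetilde\A}$ for a suitable unitization $\widetilde\A\subset C^\infty_b(\M)$, chosen so that the relevant bounded causal functions (and the compositions $\psi\circ\tau$ above) indeed lie in $C^\infty_b(\M)$.
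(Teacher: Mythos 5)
Your proposal is correct and follows essentially the same route as the paper: the forward implication from the definition of causal functions, the converse via Bernal--S\'anchez Cauchy temporal functions adapted to a suitably chosen auxiliary acausal pair (you push a point into the past of $p$ where the paper pushes one into the future of $q$, which is the same idea up to time reversal), composition with $\arctan$ for boundedness, the positive-causal-functions trick to show $\Span_\setR\C(\M)$ is an algebra, and Stone--Weierstrass for density. The one place you are looser than the paper is the final density step: since $\M$ is noncompact, Stone--Weierstrass must be applied on a compactification $X$ of $\M$ on which $\Span_\setR\C(\M)$ separates points (the paper invokes Besnard/Nachbin for this), with $\overline{\widetilde\A}=C(X)$; also, the Appendix is irrelevant here --- it serves Theorem \ref{causalfthm}, not this proposition.
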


\begin{proof}
The proof of this proposition uses the results of \cite{Bes}.

If $p \preceq q$, there exists at least one causal curve from $p$ to $q$. From the definition of a causal function $f\in\C(\M)$, it is obvious that $f(p) \leq f(q)$. Moreover, if $p\neq q$, there exists a smooth global time function $T$ (function increasing along every future directed causal curve) such that $T(p) < T(q)$, which implies that there exists $f=\tan^{-1}\circ T\in\C(\M)$ such that $f(p) < f(q)$. It remains to show that if $p \npreceq q$ and $q \npreceq p$, there exists at least one causal function $f\in\C(\M)$ such that $f(p) > f(q)$ (the cases where $p \npreceq q$ and $q \preceq p$ are already covered by symmetry). Let us take $q'$ in the future of $q$ such that $p \npreceq q'$ and $q' \npreceq p$ still hold. From the results of A.N. Bernal and M. S\'anchez on the smooth splitting theorem \cite{BS04,BS06}, we can extract a smooth global time function $T$ such that $T(p)=T(q')>T(q)$. By taking $f=\tan^{-1}\circ T\in\C(\M)$, we have $f(p) > f(q)$. 

Axioms $(a)$, $(b)$, $(c)$ and $(d)$ of the Definition \ref{causcone} are trivially satisfied for $\C(\M)$. To show that $\overline{\Span_{\setC}(\C(\M))} = \overline{\widetilde\A}$ for some unitization $\widetilde\A$, let us consider the space $V=\Span_\setR(\C(\M))$. Every element $h\in V$ may be written \mbox{$h=f-g$} where $f,g\in\C(\M)$, and since those functions are bounded they can be written as $f=f_+-\lambda 1$ and $g=g_+-\mu 1$, where $f_+$ and $g_+$ are some positive causal functions and $\lambda,\mu \geq 0$ are some non-negative constants. So every element $h\in V$ can be rewritten as the difference of two positive causal functions $h=(f_+ +\mu 1)-(g_+ + \lambda 1)$, and since the product of two positive elements in $\C(\M)$ remains in $\C(\M)$ (the product of two positive non-decreasing functions is still a non-decreasing function), this leads to the fact the $V$ is stable by products, hence $V$ is an algebra. Now, there exists a compactification $X$ of $\M$ such that $V$ can be seen as a sub-algebra of $C(X,\setR)$ which separates the points of $X$ \cite{Bes,Nachbin}. By applying the Stone-Weierstrass theorem, $V$ is dense in $C(X,\setR)$, so $\Span_{\setC}(\C(\M)) = V + iV$ is dense in the closed unital algebra $\overline{\widetilde\A} = C(X) \subset C(\beta\M) \cong C_b(\M)$ using the isomorphism provided by the Stone-\~Cech compactification $\beta\M$.
\end{proof}

We can notice that the Proposition $\ref{cfdo}$ is still valid for smooth bounded functions with bounded derivatives (or for other similar suitable restrictions), since causal relations are local and do not depend of the behaviour of the functions at infinity (i.e.~one can always make a smooth modification of a causal function such that the function remains unchanged on some compact set and the derivatives become bounded at infinity). 

The proof of Theorem \ref{mainthm} relies on the fact that the causal functions are exactly the functions respecting the axiom $(f)$ of a causal cone.\\

\begin{thm}\label{causalfthm}
Let $(\A,\widetilde\A,\H,D,\J)$ be a commutative Lorentzian spectral triple as in the Definition \ref{commlost}, then $f\in\widetilde\A$ is causal if and only if 
\[
\forall \phi \in \H, \quad\scal{\phi,\J[D,f] \phi} \leq 0,
\]
where $\scal{\cdot,\cdot}$ is the inner product on $\H$.\\
\end{thm}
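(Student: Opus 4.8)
The plan is to translate the algebraic condition $\scal{\phi,\J[D,f]\phi}\le 0$ into a pointwise geometric condition on the gradient of $f$, and then recognize that geometric condition as exactly the statement that $f$ is causal. The key observation is that for $f\in\widetilde\A$ acting by pointwise multiplication, the commutator $[D,f]$ is a zeroth-order operator: since $D=-i\gamma^\mu\nabla^S_\mu$ and $\nabla^S_\mu(f\psi)=(\partial_\mu f)\psi+f\nabla^S_\mu\psi$, the derivative terms on $\psi$ cancel and one is left with $[D,f]=-i\gamma^\mu(\partial_\mu f)=-ic(df)$, a Clifford multiplication operator acting fiberwise. Multiplying by $\J=i\gamma^0$ gives $\J[D,f]=i\gamma^0\cdot(-i)c(df)=\gamma^0 c(df)=\gamma^0\gamma^\mu\partial_\mu f$, which is again a fiberwise (matrix-valued, pointwise-in-$x$) operator. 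Hence the condition $\forall\phi\in\H,\ \scal{\phi,\J[D,f]\phi}\le0$ decouples over points of $\M$: because one can localize $\phi$ to be supported in an arbitrarily small neighborhood of any $p\in\M$ and take any spinor value there, the global inequality holds if and only if at \emph{every} point $p$ the Hermitian matrix $\gamma^0c(df)|_p$ is negative semi-definite on the spinor space $S_p$.

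\emph{Second}, I would reduce the matrix negativity condition to a condition on the covector $df$. First one must check that $\gamma^0 c(df)$ is indeed Hermitian, so that "$\scal{\phi,\J[D,f]\phi}\le0$ for all $\phi$" is the genuine negative-semidefiniteness statement and not merely a constraint on its real part; this follows from the Krein-self-adjointness of $iD$ together with $\J^*=\J$, $\J^2=1$, and $[\J,f]=0$, which give $(\J[D,f])^*=\J[D,f]$ after a short computation using $D^*=-\J D\J$. Then the heart of the matter is a linear-algebra fact about Clifford algebras: for a real covector $\omega=\omega_\mu dx^\mu$, the matrix $\gamma^0 c(\omega)=\gamma^0\gamma^\mu\omega_\mu$ is negative semi-definite on $S_p$ if and only if $\omega$ is future-directed causal, i.e.\ $g^{-1}(\omega,\omega)\le0$ and $\omega_0\ge0$ (with the appropriate sign convention tied to the choice that $\gamma^0$ is anti-Hermitian and $\gamma^a$ Hermitian, and the signature $(-,+,\dots)$). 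This spectral computation — diagonalizing $\gamma^0 c(\omega)$ and reading off the sign of its eigenvalues in terms of $g^{-1}(\omega,\omega)$ and $\omega_0$ — is the technical core, and I expect it is precisely what is relegated to the Appendix mentioned in the introduction.

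\emph{Third}, I would close the loop with the differential-geometric characterization of causal functions. A smooth function $f$ is causal (non-decreasing along every future-directed causal curve) if and only if its gradient is past-directed causal or zero, equivalently $df$ is future-directed causal or zero at every point — this is the standard fact that $\frac{d}{dt}f(\gamma(t))=df(\dot\gamma)\ge0$ for every future-directed causal tangent vector $\dot\gamma$ exactly when $df$ lies in the (closed) future cone of the cotangent space. Matching this with the condition from the second step completes the equivalence: $f$ is causal $\iff$ $df$ is future causal everywhere $\iff$ $\gamma^0c(df)\le0$ everywhere $\iff$ $\scal{\phi,\J[D,f]\phi}\le0$ for all $\phi$.

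\textbf{The main obstacle} will be the Clifford-algebra eigenvalue computation in the second step, namely establishing that negative semi-definiteness of $\gamma^0\gamma^\mu\omega_\mu$ is equivalent to $\omega$ being future causal. Care is needed because $\gamma^0 c(\omega)$ is not Clifford multiplication by a single vector and need not square to a scalar, so one cannot simply read eigenvalues off $\{\gamma^\mu,\gamma^\nu\}=2g^{\mu\nu}$; instead one should split $\omega$ into its $dx^0$-component and spatial part, use that the spatial Clifford elements anticommute with $\gamma^0$ and among themselves, and compute $(\gamma^0c(\omega))^2$ or otherwise exploit the Hermiticity structure to control the sign. The subtlety of getting the correct time-orientation sign (future versus past) from the convention "$\gamma^0$ anti-Hermitian, signature $(-,+,\dots)$" is where sign errors are most likely, and the curved case $(\gamma^0)^2=g^{00}<0$ must be handled uniformly with the flat one.
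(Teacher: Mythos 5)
Your overall reduction coincides with the paper's: $[D,f]=-i\,c(df)$ makes $\J[D,f]=\gamma^0\gamma^\mu f_{,\mu}$ a pointwise matrix, the integral inequality localizes (by the same "support $\phi$ in a small neighbourhood" argument) to negative semi-definiteness of that matrix at every point, and a $C^1$ function is causal iff its gradient is everywhere causal and past-directed, which the paper encodes as $\alpha=g^{0\mu}f_{,\mu}\le0$ and $\beta=g^{00}g^{\mu\nu}f_{,\mu}f_{,\nu}\ge0$. Where you genuinely diverge is the linear-algebra core, which you flag as the "main obstacle" and leave open. The paper handles it by computing in the Appendix all traces $\tr\bigl[(\gamma^0\gamma^\mu f_{,\mu})^k\bigr]$ via a recurrence, converting them through Newton's identities and a combinatorial identity into the full characteristic polynomial, and invoking the criterion $c_k\ge0$ for all $k$ (with a separate trace/determinant treatment for $n=2,3$). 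Your proposed route via $\bigl(\gamma^0 c(\omega)\bigr)^2$ is shorter and does work: the same anticommutation manipulation the paper uses under the trace gives, at the matrix level, $A^2=2\alpha A-\beta$ for $A=\gamma^0\gamma^\mu\omega_\mu$, so the (real, since $A^*=A$) eigenvalues lie among $\alpha\pm\sqrt{\alpha^2-\beta}$, and $\tr A=d\alpha$ forces both roots to occur with multiplicity $d/2$ when they are distinct; hence $A\le0$ iff $\alpha+\sqrt{\alpha^2-\beta}\le0$ iff $\alpha\le0$ and $\beta\ge0$, uniformly in all dimensions. Two bookkeeping points: the time-orientation condition is safest written on the gradient, $g^{0\mu}f_{,\mu}\le0$, rather than on the component $f_{,0}$ (these agree only when $g^{0i}=0$); and your Hermiticity check is correct but immediate, since $D^*=-\J D\J$ and $[\J,f]=0$ give $(\J[D,f])^*=\J[D,f]$.
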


\begin{proof}
A real-valued function $f$ is causal if it is non-decreasing along every future-directed causal curve. Since $f$ is differentiable, it is equivalent to require that $g(\nabla f, \nabla f) = g^{-1}(df,df) \leq 0$ with past-directed gradient everywhere. For a globally hyperbolic manifold $\M$, the time-orientation can be given by a global smooth time function $\T$ \cite{BS04,BS06} which allows us to set globally $x_0=\T$, and the gradient of $f$ is past directed if and only if $g(\nabla f,\nabla T) \leq 0$. It follows that $f$ is causal if and only if it satisfies the two following conditions:
\[
g(\nabla f,\nabla T) = g^{\mu\nu} f_{,\mu} T_\nu = g^{\mu0} f_{,\mu} \leq 0,
\]
\[
g(\nabla f, \nabla f) = g^{\mu\nu} f_{,\mu} f_{,\nu} = f_{,\mu}f^{,\mu} \leq 0,
\]
where $df=f_{,\mu} dx^\mu$.

Let us define $\alpha = f^{,0} = g^{0\mu}f_{,\mu}$ and $\beta = g^{00} (f_{,\mu}f^{,\mu}) = g^{00} g^{\mu\nu} f_{,\mu} f_{,\nu}$. Then  $f$ is causal if and only if
\[
\alpha = g^{0\mu}f_{,\mu} \leq 0  \quad\text{and}\quad \beta = g^{00} g^{\mu\nu} f_{,\mu} f_{,\nu} \geq 0,
\]
since we work with a metric such that $g^{00} < 0$.

From the well known property $[D,f]=-i\,c(df)$ \cite{Var}, we find that
\[\J[D,f] = i c(dx^0) \,(-i)\,c(df) = i\gamma^0  \,(-i)\,(\gamma^\mu f_{,\mu}) = \gamma^0\gamma^\mu f_{,\mu}. \]

Now, since $f_{,\mu}$ are continuous, the condition that 
\begin{equation}\label{inequint}
\forall \phi \in \H, \ \scal{\phi,\J[D,f] \phi}  = \int_{\M} \phi^* \J[D,f] \phi \;d\mu = \int_{\M} \phi^* (\gamma^0\gamma^\mu f_{,\mu}) \phi \;d\mu \leq 0
\end{equation}
is equivalent to require that
\begin{equation}\label{inequpoint}
\forall \phi \in \H, \quad \phi^* (\gamma^0\gamma^\mu f_{,\mu}) \phi  \leq 0
\end{equation}
at every point of the manifold. Indeed, \eqref{inequpoint} implies \eqref{inequint}, and if \eqref{inequpoint} is false at some point, then it must be false on some neighbourhood for some spinor vanishing outside this neighbourhood, and \eqref{inequint} cannot be true for this spinor (hence it cannot be true for all spinors in $\H$). 

In order to prove the theorem, it remains to show that, on every point of $\M$, the matrix $\J[D,f]=\gamma^0 \gamma^{\mu} f_{,\mu}$ is negative semi-definite in $\setC^{2^{\lfloor{n/2}\rfloor}}$ if and only if $\alpha\leq0$ and $\beta\geq0$.

First, let us assume that the dimension of the manifold $\M$ is $n=2$ or $3$. In this case, $\J[D,f]=\gamma^0 \gamma^{\mu} f_{,\mu}$ is a $2\times 2$ matrix which is negative semi-definite if and only if its trace is non-positive and its determinant is non-negative. The determinant is given by
\[
\det(\J[D,f]) = \exp(\tr(\log(\J[D,f]))) = \frac{1}{2} ( a_1^2 - a_2),
\]
where $a_k=\tr\left[(\J[D,f])^k\right]$. The values of $a_k$ are calculated in the Appendix (Lemma \ref{tmtr}) and give $a_1=2\alpha$ and $a_2=4\alpha^2 - 2\beta$. Hence $\J[D,f]$ is negative semi-definite if and only if
\[
\tr(\J[D,f])=2\alpha \leq 0 \quad\text{and}\quad\det(\J[D,f])= \frac{1}{2} ( a_1^2 - a_2) = \beta \geq 0.
\] 

Let us assume now that $n\geq 4$. The eigenvalues of $\J[D,f]=\gamma^0 \gamma^{\mu} f_{,\mu}$ are the roots of the characteristic polynomial
\[
\det(\lambda 1-\J[D,f])  =  \lambda^d + c_1 \lambda^{d-1} + \ldots + c_{d-1} \lambda + c_d,\quad d=2^{\lfloor{n/2}\rfloor} \geq 4,
\]
where the coefficients are given by the following formula:
\begin{equation}\label{coefck}
c_k = (-1)^k \sum_{i = 0}^{\lfloor \frac{k}{2} \rfloor}  \binom{\frac d2}{k-i}  \binom{k-i}{i}  (2\alpha)^{k-2i} \beta^i.
\end{equation}
We adopt the standard convention $\binom{a}{b} = 0$ for $b<0$ or $b > a$. The complete calculation of those coefficients requires some combinatorial computations and is given in the Appendix (Lemma \ref{coefappendix}).

From the Vieta's formulas, we know that the matrix $\J[D,f]$ is negative semi-definite if and only if $c_k \geq 0$ for every $k\in\set{1,2,\dots,d}$.

\begin{itemize}
\item {\it (necessary condition)} Let us assume that $\alpha\leq0$ and $\beta\geq0$. We have that $(-1)^k (2\alpha)^{k-2i} \beta^i$ is non-negative, and since the binomial coefficients in \eqref{coefck} are non-negative, we have $c_k \geq 0$ for $k\in\set{1,2,\dots,d}$.
\item {\it (sufficient condition)} Let us assume that $c_k \geq 0$ for $k\in\set{1,2,\dots,d}$. For $k=1$, we get $c_1 = - d \alpha \geq 0 \implie \alpha \leq 0$.

If $\alpha = 0$, then $c_2 = \binom{\frac d2}{1}  \binom{1}{1} \beta = \frac d2 \beta$ since the only term remaining in the sum is the term with index $i=1$. So $c_2 \geq 0$ implies $\beta \geq 0$.

If $\alpha < 0$, then we can consider $c_{k}$ with $k=d-1$. Since the binomial coefficient $\binom{\frac d2}{k-i}$ is not null only if $k-i = d-1-i\leq \frac d2$, we have $i\geq \frac d2-1$, and since the maximal value of $i$ is ${\lfloor \frac{k}{2} \rfloor} = {\lfloor \frac{d-1}{2} \rfloor} = \frac d2-1$, the only remaining term is the one with the index $i=\frac d2-1$, so
\[
c_{d-1} = -  \binom{\frac d2}{\frac d2-1} \binom{\frac d2-1}{\frac d2-1} 2\alpha \beta^{\frac d2-1} = - d \alpha \beta^{\frac d2-1} \geq 0.
\] 
Since $\alpha <0$ and $\frac d2-1$ is odd, we get $\beta \geq 0$. 
\end{itemize}
\end{proof}

\vspace{0.5cm}

With this theorem, the proof of Theorem \ref{mainthm} is now complete. We can add the following remarks:
\begin{itemize}
\item The condition $\forall \phi \in \H, \scal{\phi,\J[D,f] \phi} \leq 0$ could be replaced by the condition $\forall \phi \in \H, \scal{\phi,\J[D,f] \phi} \geq 0$ if one chooses the other signature convention $(+,-,-,-,\dots)$. This is somehow similar to the condition that a smooth curve $\gamma : \setR \rightarrow \M$ is causal if and only if $\forall t\in\setR, g(\gamma'(t),\gamma'(t)) \leq 0$ for the first convention, or if and only if $\forall t\in\setR, g(\gamma'(t),\gamma'(t)) \geq 0$ for the other convention. The sign of the inequality if also reversed if the opposite time-orientation is chosen.
\item A causal cone is automatically empty if one tries to make such construction on a Lorentzian manifold that is not causal.
Indeed, such manifold contains closed timelike curves (in particular, this is the case for compact spacetimes \cite{Gallo,Tipler,Geroch}). Causal functions are constant on such curves and the set of causal functions cannot separate the points on the curve. As a consequence, the condition $\overline{\Span_{\setC}(\C)} = \overline{\widetilde\A}$ cannot be fulfilled. This condition is related to the condition of existence of a causal structure on the space.\\
\end{itemize}

\vspace{0.5cm}

\section{Causality in the noncommutative regime}\label{causnc}

In the previous section, we have succeeded in defining the causal structure of a globally hyperbolic Lorentzian manifold in fully algebraic terms. By the standard correspondence principle of noncommutative geometry \cite{C94}, we extend the Definitions \ref{causcone} and \ref{defcc} to general noncommutative spaces understood in terms of Lorentzian spectral triples. In this section, we make use of those definitions to introduce the notions of causal and chronological futures (pasts) for the space of states.

Let \mbox{$(\A,\widetilde\A,\H,D,\J)$} be a (possibly noncommutative) Lorentzian spectral triple and let us suppose that a causal cone $\C$ exists. We denote by $\preceq$ the partial order relation on $S(\widetilde\A)$ induced by $\C$.

We propose the following definitions of causal futures (pasts) of states:
\begin{defn} \label{Jdef}
For every $\chi\in S(\widetilde\A)$, we define
\[J^+_S(\chi)=\set{\xi\in S(\widetilde\A) : \chi \preceq \xi}  \quad\text{and}\quad J^-_S(\chi)=\set{\xi\in S(\widetilde\A) : \xi \preceq \chi}, \]
for every $\chi\in P(\widetilde\A)$, we define
\[J^+_P(\chi)=\set{\xi\in P(\widetilde\A) : \chi \preceq \xi}   \quad\text{and}\quad  J^-_P(\chi)=\set{\xi\in P(\widetilde\A) : \xi \preceq \chi} \]
and for every $\chi\in \M(\widetilde\A) = \set{\chi\in P(\widetilde\A) : \A \not\subset \ker \chi}$, we define
\[J^+_\M(\chi)=\set{\xi\in \M(\widetilde\A) : \chi \preceq \xi}   \quad\text{and}\quad  J^-_\M(\chi)=\set{\xi\in \M(\widetilde\A) : \xi \preceq \chi}. \]
\end{defn}

It is obvious that $J^+_P(\chi) = J^+_S(\chi) \cap P(\widetilde\A)$ and $J^+_\M(\chi) = J^+_S(\chi) \cap \M(\widetilde\A)$ for a suitable $\chi$.

\begin{remark} \label{Sremark}
If two states $\chi$ and $\xi$ are such that $\chi \preceq \xi$, then for every convex combination $\eta = (1-\lambda)\chi + \lambda \xi \in S(\widetilde\A)$, $\lambda\in[0,1]$, we have:
\[
\forall a\in\C,\quad \chi(a) \leq  \eta(a) \leq \xi(a)  \quad\implies\quad \chi \preceq \eta \preceq \xi, 
\]
which implies that two states are causaly related if they are convex combinations of two causaly related pure states (but this is not a necessarily condition).
\end{remark}

We also have the following result:

\begin{prop}
For every $\chi\in S(\widetilde\A)$, $J^+_S(\chi)$ and $J^-_S(\chi)$ are two closed convex sets in $S(\widetilde\A)$ (for the weak-$^*$ topology) such that $J^+_S(\chi) \cap J^-_S(\chi) = \set{\chi}$.
\end{prop}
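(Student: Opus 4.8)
The plan is to establish the three assertions---convexity, weak-$^*$ closedness, and the intersection identity---separately, in each case exploiting that $J^+_S(\chi)$ and $J^-_S(\chi)$ are assembled from the evaluation functionals $e_a \colon \xi \mapsto \xi(a)$ indexed by $a\in\C$. I would first record that, since $\C\subset\widetilde\A$, each $e_a$ is defined on all of $S(\widetilde\A)$, and takes real values on elements of $\C$ because states are positive and hence real on Hermitian elements (axiom $(a)$). Thus the defining conditions $\chi(a)\leq\xi(a)$ are meaningful, and we may write $J^+_S(\chi)=\bigcap_{a\in\C}\set{\xi\in S(\widetilde\A) : \chi(a)\leq\xi(a)}$, with the reversed inequality for $J^-_S(\chi)$.

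For convexity I would fix $a\in\C$ and use that $e_a$ is an affine (indeed linear) functional, so each set $\set{\xi\in S(\widetilde\A) : \chi(a)\leq\xi(a)}$ is the intersection of the convex set $S(\widetilde\A)$ with a half-space, hence convex. An arbitrary intersection of convex sets is convex, so $J^+_S(\chi)$ is convex, and the argument for $J^-_S(\chi)$ is identical. For closedness I would invoke the defining property of the weak-$^*$ topology, namely that every evaluation map $e_a$ is continuous; consequently $\set{\xi : \chi(a)\leq\xi(a)}$ is the preimage under $e_a$ of the closed half-line $[\chi(a),+\infty)$ and is therefore weak-$^*$ closed. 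Being an intersection of such closed sets with the weak-$^*$ closed set $S(\widetilde\A)$, $J^+_S(\chi)$ is weak-$^*$ closed, and likewise $J^-_S(\chi)$.

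For the final identity I would argue by double inclusion. Reflexivity of $\preceq$ gives $\chi\preceq\chi$, so $\chi$ lies in both sets and $\set{\chi}\subseteq J^+_S(\chi)\cap J^-_S(\chi)$. Conversely, if $\xi\in J^+_S(\chi)\cap J^-_S(\chi)$, then $\chi\preceq\xi$ and $\xi\preceq\chi$ simultaneously, and the antisymmetry of $\preceq$ already established in the Proposition above forces $\xi=\chi$; hence the intersection reduces to $\set{\chi}$.

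I do not expect any genuine obstacle here: the statement is elementary once antisymmetry of $\preceq$ is available, since the whole content is that intersections of weak-$^*$ continuous affine inequalities are closed and convex. The only point warranting a moment's care is the interplay between states regarded on $\widetilde\A$ and on its $C^*$-completion; but because $\C\subset\widetilde\A$, the evaluations $e_a$ are unambiguous, and no real difficulty arises.
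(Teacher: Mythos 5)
Your proof is correct and follows essentially the same route as the paper's: convexity and weak-$^*$ closedness of the sets cut out by the inequalities $\chi(a)\leq\xi(a)$ for $a\in\C$, plus antisymmetry of $\preceq$ for the intersection identity. Your phrasing of closedness as an intersection of preimages of closed half-lines under the weak-$^*$ continuous evaluations is marginally more robust than the paper's sequential argument (which, strictly, should use nets since the weak-$^*$ topology need not be first countable), but the content is identical.
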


\begin{proof}
Let us suppose that $\xi_0,\xi_1\in J^+_S(\chi)$ and that $\xi_\lambda = (1-\lambda) \xi_0 + \lambda \xi_1$, $\lambda\in[0,1]$ is a convex combination.  Then, $\forall a\in\C$,
\[ \xi_\lambda (a)= (1-\lambda) \xi_0(a) + \lambda \xi_1(a) \geq (1-\lambda) \chi(a) + \lambda \chi(a) = \chi(a),\]
so $\xi_\lambda\in J^+_S(\chi)$. Moreover, if $\xi_n\in J^+_S(\chi)$ and $\xi_n \stackrel{w^*}{\rightarrow} \xi$, then $\forall a\in\C$, $\xi_n(a) {\rightarrow} \xi(a) \geq \chi(a)$ so $\xi\in J^+_S(\chi)$.

$J^+_S(\chi) \cap J^-_S(\chi) = \set{\chi}$ comes from the antisymmetry property of the partial order relation $\preceq$.
\end{proof}

Now, we let us recall that, for a Lorentzian manifold, the chronological future $I^+(p)=\set{q:p\precc q}$, which is the set of all points reached by a future directed timelike curve starting at $p$, corresponds to the topological interior of the causal future $J^+(p)=\set{q:p\preceq q}$ \cite{Beem}. This observation justifies the following definition of a chronological future (past) of a state based on the weak-$^*$ topology.

\begin{defn}
For every $\chi$ respectively in $S(\widetilde\A)$, $P(\widetilde\A)$ and $\M(\widetilde\A)$, we define the following chronological futures (pasts):
\begin{itemize}
\item $I^\pm_S(\chi)= S(\widetilde\A) \setminus \overline{\prt{ S(\widetilde\A) \setminus J^\pm_S(\chi) }} $,
\item $I^\pm_P(\chi)= P(\widetilde\A) \setminus \overline{\prt{ P(\widetilde\A) \setminus J^\pm_P(\chi) }} $,
\item $I^\pm_\M(\chi)= \M(\widetilde\A) \setminus \overline{\prt{ \M(\widetilde\A) \setminus J^\pm_\M(\chi) }} $,
\end{itemize}
where the closure denotes the weak-$^*$ closure.
\end{defn}
\begin{defn}
For every two pure states $\chi,\xi\in S(\widetilde\A)$ (respectively two states $\chi,\xi$ in $P(\widetilde\A)$ or $\M(\widetilde\A))$, we define
\[\chi \precc \xi \quad \text{ iff }\quad \xi \in I^+_S(\chi)\quad (\text{resp. }\ \xi \in I^+_P(\chi) \text{ or } \xi \in I^+_\M(\chi)).\]
\end{defn}

The following proposition shows that this definition of the chronological future for noncommutative spaces is coherent with the usual definition in the commutative case.

\begin{prop}
If $(\A,\widetilde\A,\H,D,\J)$ is a commutative Lorentzian spectral triple as in the Definition \ref{commlost}, then the chronological relation $\precc$ on $\M(\widetilde\A)$ corresponds to the usual chronological relation.
\end{prop}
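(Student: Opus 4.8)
The plan is to recognize the defining formula for $I^\pm_\M(\chi)$ as a topological interior and to transport it through the homeomorphism furnished by Theorem \ref{mainthm}. For any topological space $X$ and any subset $A\subseteq X$ one has $\mathrm{int}(A)=X\setminus\overline{X\setminus A}$; applied with $X=\M(\widetilde\A)$ equipped with the weak-$^*$ topology and $A=J^\pm_\M(\chi)$, this shows that $I^\pm_\M(\chi)$ is exactly the weak-$^*$ interior of the causal future (past) $J^\pm_\M(\chi)$. Thus the whole statement reduces to comparing the weak-$^*$ topology on $\M(\widetilde\A)$ with the manifold topology on $\M$, and then invoking the classical relation between $I^\pm$ and $J^\pm$.

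The key step is therefore to show that the bijection $\M(\widetilde\A)\cong\M$ of Theorem \ref{mainthm} is a homeomorphism, where $\M$ carries its manifold topology and $\M(\widetilde\A)$ the weak-$^*$ topology inherited from $S(\widetilde\A)$. Two topologies are a priori at play on $\M(\widetilde\A)$: the initial topology $\tau_{\widetilde\A}$ of the evaluations $\chi\mapsto\chi(b)$, $b\in\widetilde\A$, and the initial topology $\tau_\A$ of the evaluations $\chi\mapsto\chi(a)$, $a\in\A$. By the Gel'fand--Naimark theorem, $\tau_\A$ is, under the identification $\M(\widetilde\A)\cong\Delta(\A)$ established in the two propositions preceding Theorem \ref{mainthm}, precisely the Gel'fand topology on the spectrum, which is homeomorphic to the manifold topology of $\M$. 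Since $\A\subset\widetilde\A$, more evaluation maps are available for $\tau_{\widetilde\A}$, so this topology is a priori finer: $\tau_{\widetilde\A}\supseteq\tau_\A$. For the reverse inclusion it suffices to check that each evaluation $\chi\mapsto\chi(b)$, $b\in\widetilde\A$, is $\tau_\A$-continuous; but under the identification with $\M$ this map is $p\mapsto b(p)$ with $b\in\widetilde\A\subset C^\infty_b(\M)$ continuous, hence continuous for the manifold topology, which is $\tau_\A$. Therefore $\tau_{\widetilde\A}=\tau_\A$ and the weak-$^*$ topology on $\M(\widetilde\A)$ coincides with the manifold topology.

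With the homeomorphism in hand, I would combine it with the order-isomorphism of Theorem \ref{mainthm}: writing $p,q\in\M$ for the points corresponding to $\chi,\xi\in\M(\widetilde\A)$, one has $\xi\in J^\pm_\M(\chi)$ iff $q\in J^\pm(p)$, so $J^\pm_\M(\chi)$ corresponds to $J^\pm(p)$ as a subset, and since a homeomorphism maps interiors to interiors, $I^\pm_\M(\chi)$ corresponds to $\mathrm{int}\,J^\pm(p)$. It then remains only to invoke the standard fact of Lorentzian causality theory, recalled just before the definition, that for a globally hyperbolic (in particular strongly causal) manifold the chronological future equals the topological interior of the causal future, $I^\pm(p)=\mathrm{int}\,J^\pm(p)$ \cite{Beem}. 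Consequently $\xi\in I^+_\M(\chi)$ if and only if $q\in I^+(p)$, i.e.\ the algebraic relation $\chi\precc\xi$ holds precisely when $p\precc q$ holds for the usual chronological relation on $\M$.

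I expect the main obstacle to be the identification $\tau_{\widetilde\A}=\tau_\A$, namely proving that the two weak-$^*$ topologies genuinely agree on $\M(\widetilde\A)$ rather than merely that the causal futures coincide setwise; once this is secured, the interior formula and the equality $I^\pm=\mathrm{int}\,J^\pm$ make the conclusion immediate given Theorem \ref{mainthm} and the cited causality result.
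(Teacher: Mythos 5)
Your proposal is correct and follows essentially the same route as the paper: both reduce the statement to identifying the weak-$^*$ topology on $\M(\widetilde\A)$ with the manifold topology of $\M$ and then invoking the classical fact $I^\pm(p)=\mathrm{int}\,J^\pm(p)$ for globally hyperbolic spacetimes. The only difference is technical --- you establish the homeomorphism by comparing the initial topologies $\tau_\A$ and $\tau_{\widetilde\A}$ directly, whereas the paper argues in both directions with sequences of boundary points (implicitly relying on sequential characterizations of the weak-$^*$ closure and on $\widetilde\A$ separating points); your version is, if anything, slightly more careful on those points.
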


\begin{proof}
We consider that the points $p,q,p_n$ ($n\in\setN$) are associated to the states $\chi,\xi,\chi_n\in\M(\widetilde\A)$ by $f(p)=\chi(f)$, $f(q)=\xi(f)$ and $f(p_n)=\chi_n(f)$ for all $f\in\widetilde\A$. We already know that $p_n \not\in J^+(q)$ if and only if $\chi_n \not\in J^+_\M(\xi)$ by Theorem \ref{mainthm}.

Let us suppose that $p\in J^+ (q)\setminus I^+ (q)$, then $\exists p_n \not\in J^+(q)$ such that $p_n \rightarrow p$. For all $f\in\widetilde\A$, we have that $f(p_n) \rightarrow f(p)$ which implies $\chi_n(f) \rightarrow \chi(f)$, so $\chi_n \stackrel{w^*}{\rightarrow} \chi$ with $\chi_n \not\in J^+_\M(\xi)$, hence $\chi\in J^+_\M (\xi)\setminus I^+_\M (\xi)$.

Now, let us suppose that $\chi\in J^+_\M (\xi)\setminus I^+_\M (\xi)$, then $\exists \chi_n \not\in J^+_\M (\xi)$ such that $\chi_n \stackrel{w^*}{\rightarrow} \chi$, which means that $\forall f\in\widetilde\A$, $\chi_n(f) \rightarrow \chi(f)$ which implies $f(p_n) \rightarrow f(p)$. Since $\widetilde\A$ separates the points we have $p_n \stackrel{w}\rightarrow p$, which is equivalent to the strong convergence since the spacetime is finite dimensional, hence $p\in J^+ (q)\setminus I^+ (q)$.\\
\end{proof}

\section{An algebraic constraint for the Lorentzian distance formula}\label{secdist}

In this section, we show that, if the dimension of the manifold is even, the result given in Theorem \ref{causalfthm} can be extended in order to have a suitable constraint for the algebraization of the Lorentzian distance formula.

We recall that a Lorentzian distance, also called proper time, on a Lorentzian manifold $\M$ is a function $d : \M \times \M \rightarrow [0,+\infty) \cup \set{+\infty}$ defined by
$$
d(p,q) =
\left\{ 
\begin{array}{ll}
\quad \sup \left\{ l(\gamma) : 
\begin{array}{c}
\gamma \text{ future directed causal}\\
\text{piecewise smooth curve}\\
\text{with } \gamma(0)=p,\ \gamma(1)=q
\end{array}
\right\} 
&\text{if } p \preceq q\\ 
\quad 0 &\text{if } p \npreceq q
\end{array}
\right.
 $$
 where $l(\gamma) = \int \sqrt{ -g_{\gamma(t)}( \dot\gamma(t) ,  \dot\gamma(t) )}\ dt$ is the length of the curve.\\
 
If $\M$ is globally hyperbolic, this function is finite and continuous \cite{Beem}, and respects the following properties:
\begin{enumerate}
\item $d(p,p) = 0$,
\item $d(p,q) \geq 0$ for all $p,q\in\M$,
\item if $d(p,q) > 0$, then $d(q,p) = 0$,
\item if $d(p,q) > 0$ and $d(q,r) > 0$, then $d(p,r) \geq d(p,q) + d(q,r)$.\begin{flushright} (``wrong way'' triangle inequality)\end{flushright}
\end{enumerate}

We have the following result on the algebraization of the Lorentzian distance formula \cite{F3}:
\begin{thm}\label{gec}
Let us consider a oriented time-oriented Lorentzian manifold $\M$ and two points $p,q\in\M$. Then
\[
d(p,q) \leq \inf\set{ \max\set{0,f(q)-f(p)} \ :\  \begin{array}{c} f \in C^\infty(\M,\setR),\\
 \sup g( \nabla f, \nabla f ) \leq -1 , \\
 \nabla f \text{ is past-directed} \end{array}\ } \cdot
\]
Moreover, the equality can always be obtained by extending the set of smooth functions to the set of continuous causal functions.
\end{thm}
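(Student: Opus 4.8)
The plan is to establish the two assertions in turn: first the inequality $d(p,q)\le\inf\set{\ldots}$ for smooth test functions, then the saturation of the bound after enlarging the admissible class to continuous causal functions. I would begin by disposing of the trivial case: if $p\npreceq q$ then $d(p,q)=0$ by definition, while every term $\max\set{0,f(q)-f(p)}$ is non-negative, so the bound holds automatically. Hence I may assume $p\preceq q$ and fix an arbitrary smooth $f$ with $\nabla f$ past-directed and $\sup g(\nabla f,\nabla f)\le -1$; the goal is to bound the length of every causal curve joining $p$ to $q$ by $f(q)-f(p)$.

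The heart of the inequality is a pointwise reverse Cauchy--Schwarz estimate along causal curves. Let $\gamma$ be a future-directed causal piecewise smooth curve from $p$ to $q$. Since $-\nabla f$ is future-directed timelike and $\dot\gamma$ is future-directed causal, one has $g(\nabla f,\dot\gamma)>0$, so $f$ is increasing along $\gamma$; moreover the Lorentzian reverse Cauchy--Schwarz inequality gives $g(\nabla f,\dot\gamma)^2\ge g(\nabla f,\nabla f)\,g(\dot\gamma,\dot\gamma)\ge -g(\dot\gamma,\dot\gamma)$, where the last step uses $g(\nabla f,\nabla f)\le -1$ and $g(\dot\gamma,\dot\gamma)\le 0$. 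Taking positive square roots yields $g(\nabla f,\dot\gamma)\ge\sqrt{-g(\dot\gamma,\dot\gamma)}$ at each point, and integrating along $\gamma$ gives $f(q)-f(p)=\int g(\nabla f,\dot\gamma)\,dt\ge\int\sqrt{-g(\dot\gamma,\dot\gamma)}\,dt=l(\gamma)$. Taking the supremum over all such curves gives $f(q)-f(p)\ge d(p,q)\ge 0$, so $\max\set{0,f(q)-f(p)}=f(q)-f(p)\ge d(p,q)$; the infimum over admissible $f$ then establishes the inequality.

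For the equality I would exhibit the Lorentzian distance function itself as the extremal, generally non-smooth, causal function. Take $f(x)=d(p,x)$ (equivalently $-d(x,q)$). Global hyperbolicity guarantees that $f$ is finite and continuous, and the ``wrong way'' triangle inequality shows $f$ is causal: if $x\preceq y$ then $d(p,y)\ge d(p,x)+d(x,y)\ge d(p,x)$, while $f$ vanishes off $J^+(p)$, so $f$ is non-decreasing along future-directed causal curves. Where $f$ is smooth (on the interior of $J^+(p)$ away from the cut locus) it solves the eikonal equation $g(\nabla f,\nabla f)=-1$ with past-directed gradient, so it meets the gradient constraint. Since $f(p)=d(p,p)=0$ and $f(q)=d(p,q)$, this function realizes the value $d(p,q)$, matching the lower bound and forcing equality once the infimum ranges over the enlarged class.

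The main obstacle is precisely the regularity gap that compels the passage from smooth functions to continuous causal functions: $d(p,\cdot)$ is in general only locally Lipschitz and fails to be smooth on the cut locus and on the boundary of $J^+(p)$, so both the gradient condition and the integration step must be reinterpreted for merely Lipschitz functions. I would make this rigorous by working with the almost-everywhere gradient (Rademacher's theorem), verifying that $f\circ\gamma$ is absolutely continuous so that $f(q)-f(p)=\int(f\circ\gamma)'\,dt$ persists, and checking that the reverse Cauchy--Schwarz estimate survives almost everywhere along $\gamma$; this simultaneously extends the inequality of the second step to the continuous causal class and confirms that $d(p,\cdot)$ is admissible, thereby yielding equality. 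The delicate points are controlling $d(p,\cdot)$ near the null boundary and ensuring that the maximizing causal curves do not concentrate on the non-differentiability set.
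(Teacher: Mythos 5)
The paper does not actually prove this theorem: it is imported verbatim from \cite{F3}, so there is no internal proof to compare against. Judged on its own, your argument for the inequality is correct and is the standard one (and the one used in \cite{F3}): the reverse Cauchy--Schwarz estimate $g(\nabla f,\dot\gamma)^2\geq g(\nabla f,\nabla f)\,g(\dot\gamma,\dot\gamma)\geq -g(\dot\gamma,\dot\gamma)$ for the timelike vector $\nabla f$ against the causal vector $\dot\gamma$, together with the sign analysis showing $g(\nabla f,\dot\gamma)>0$ for past-directed $\nabla f$, integrates to $f(q)-f(p)\geq l(\gamma)$ for every future-directed causal $\gamma$ from $p$ to $q$; taking the supremum over $\gamma$ and then the infimum over admissible $f$ gives the claim, and the case $p\npreceq q$ is indeed trivial. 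Your choice of $f=d(p,\cdot)$ as the extremal continuous causal function for the equality is also the right one, and your verification that it is causal via the reverse triangle inequality and that it realizes the value $d(p,q)$ is correct (this does require global hyperbolicity, or at least continuity and finiteness of $d$, which the bare statement of the theorem omits but the surrounding context supplies).

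The one genuinely delicate point is your proposed Rademacher-based extension of the admissibility condition to Lipschitz functions. Rademacher gives differentiability of $f$ on a set of full measure in $\M$, but a fixed causal curve $\gamma$ has measure zero in $\M$, so you cannot conclude that the pointwise estimate $(f\circ\gamma)'\geq\sqrt{-g(\dot\gamma,\dot\gamma)}$ holds almost everywhere on the parameter interval merely from the almost-everywhere gradient condition on $\M$; a maximizing curve could a priori spend positive parameter time in the bad set. The clean way around this, and the route taken in \cite{F3}, is to replace the differential eikonal constraint by its integrated (``steepness'') counterpart, namely $f(y)-f(x)\geq d(x,y)$ for all $x\preceq y$: this condition is manifestly satisfied by $d(p,\cdot)$ via the reverse triangle inequality, is implied by the smooth gradient conditions through your own integration argument, and makes the lower bound $f(q)-f(p)\geq d(p,q)$ for admissible $f$ immediate without any integration along curves. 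With that substitution your argument closes; as written, the last step of the equality part has a gap.
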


The following theorem extends our algebraic constraint in order to have a control on the growth rate of smooth causal functions.\\

\vspace{0.5cm}

\begin{thm}\label{distfthm}
Let $(\A,\widetilde\A,\H,D,\J)$ be an even commutative Lorentzian spectral triple as in the Definition \ref{commlost} and let us consider the *-algebra $C^\infty(\M,\setR)$ of smooth functions, then $f\in C^\infty(\M,\setR)$ respects the following constraints
\[
\sup g( \nabla f, \nabla f ) \leq -1  \quad\text{and}\quad \nabla f \text{ is past-directed}
\]
if and only if 
\[
\forall \phi \in \H, \quad\scal{\phi,\J([D,f]+ i\gamma) \phi } \leq 0,
\]
where $\scal{\cdot,\cdot}$ is the inner product on $\H$ and $\gamma$ is the chirality element.
\end{thm}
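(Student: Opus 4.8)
The plan is to reduce the operator inequality to a pointwise matrix condition, exactly as in the proof of Theorem~\ref{causalfthm}, and then to compute the eigenvalues of the modified matrix $\J([D,f]+i\gamma)$ using the combinatorial machinery already assembled in the Appendix. First I would use the continuity of $f_{,\mu}$ together with the localisation argument from Theorem~\ref{causalfthm}: since
\[
\scal{\phi,\J([D,f]+i\gamma)\phi} = \int_{\M} \phi^* \,\J([D,f]+i\gamma)\,\phi \;d\mu,
\]
the global inequality $\scal{\phi,\J([D,f]+i\gamma)\phi}\leq 0$ for all $\phi\in\H$ holds if and only if the Hermitian matrix $\J([D,f]+i\gamma)$ is negative semi-definite at every point of $\M$. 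So the theorem becomes a purely linear-algebraic statement about a single $d\times d$ matrix, $d=2^{\lfloor n/2\rfloor}$, at each point.

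Next I would identify this matrix explicitly. From $[D,f]=-i\,c(df)$ one has $\J[D,f]=\gamma^0\gamma^\mu f_{,\mu}$ as in Theorem~\ref{causalfthm}, and using $\J=i\gamma^0$ the extra term is $\J(i\gamma)=i\gamma^0(i\gamma)=-\gamma^0\gamma$. Hence
\[
\J([D,f]+i\gamma)=\gamma^0\gamma^\mu f_{,\mu}-\gamma^0\gamma.
\]
The key structural fact I would exploit is that in even dimensions the chirality element $\gamma$ anticommutes with every $\gamma^\mu$ and satisfies $\gamma^2=1$, so $\gamma^0\gamma$ behaves algebraically like an extra Clifford generator orthogonal to the $\gamma^\mu$. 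Concretely, I expect that adding the term $-\gamma^0\gamma$ shifts the relevant invariants in a controlled way: rather than recomputing the traces $a_k=\tr[(\J[D,f])^k]$ from scratch, I would redo the Appendix computation (Lemmas~\ref{tmtr} and~\ref{coefappendix}) for the matrix $M=\gamma^0\gamma^\mu f_{,\mu}-\gamma^0\gamma$, keeping track of how the chirality term enters the symmetric functions. The anticommutation relations should guarantee that the cross terms between $\gamma^0\gamma^\mu f_{,\mu}$ and $\gamma^0\gamma$ vanish under the trace, so that $\alpha=f^{,0}$ is unchanged while the analogue of $\beta=g^{00}g^{\mu\nu}f_{,\mu}f_{,\nu}$ is replaced by $\tilde\beta=g^{00}(g^{\mu\nu}f_{,\mu}f_{,\nu}+1)=\beta+g^{00}$; equivalently the condition $\beta\geq 0$ of Theorem~\ref{causalfthm} becomes $g^{\mu\nu}f_{,\mu}f_{,\nu}\leq -1$, i.e. $g(\nabla f,\nabla f)\leq -1$, while $\alpha\leq 0$ still encodes that $\nabla f$ is past-directed.

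With these modified invariants in hand, the Vieta argument proceeds verbatim: the characteristic coefficients $c_k$ of $M$ are given by the same formula~\eqref{coefck} with $\beta$ replaced by $\tilde\beta$, and $M$ is negative semi-definite if and only if $c_k\geq 0$ for all $k$, which by the same necessary-and-sufficient analysis (examining $c_1$, and then $c_2$ or $c_{d-1}$ according to whether $\alpha=0$ or $\alpha<0$) is equivalent to $\alpha\leq 0$ and $\tilde\beta\geq 0$. Translating back, this is precisely $\nabla f$ past-directed and $g(\nabla f,\nabla f)\leq -1$ at that point; taking the condition at every point yields $\sup g(\nabla f,\nabla f)\leq -1$. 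The main obstacle I anticipate is the bookkeeping in the trace computation with the chirality insertion: one must verify carefully that $\gamma$ genuinely contributes only through the single scalar shift $\beta\mapsto\beta+g^{00}$ and produces no new mixed invariants that would spoil the clean form of~\eqref{coefck}. This hinges on the even-dimensionality hypothesis, since it is exactly $\gamma D=-D\gamma$ and $\gamma^2=1$ that make $\gamma^0\gamma$ play the role of one more mutually anticommuting square-root-of-$g^{00}$ generator.
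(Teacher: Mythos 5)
Your proposal is correct and follows essentially the same route as the paper: localise the inequality to pointwise negative semi-definiteness, treat the chirality term as one extra mutually anticommuting Clifford generator so that $\alpha$ is unchanged and $\beta$ is shifted to $g^{00}\bigl(g^{\mu\nu}f_{,\mu}f_{,\nu}+1\bigr)$, and rerun the characteristic-polynomial/Vieta argument (this is exactly what Lemma~\ref{coefdist} does via the extended index $\gamma^n=\gamma$, $f_{,n}=-1$, $g^{nn}=1$). The only cosmetic slip is calling $\gamma^0\gamma$ a ``square-root-of-$g^{00}$'': one has $(\gamma^0\gamma)^2=-g^{00}$, and the relevant bookkeeping is that the new generator $\gamma^n=\gamma$ squares to $+1$ with $g^{0n}=0$, which is precisely why the cross terms vanish and only the scalar shift of $\beta$ survives.
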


\begin{proof}
Let us set $\alpha =  g^{0\mu}f_{,\mu}$ and $\beta = g^{00} (g^{\mu\nu} f_{,\mu} f_{,\nu} + 1) = g^{00} (g( \nabla f, \nabla f ) +1) $. We can restart the proof of Theorem \ref{causalfthm} with the new matrix  $\J([D,f]+ i\gamma) = \gamma^0 \gamma^{\mu} f_{,\mu} - \gamma^0\gamma$, using Lemma \ref{coefdist} (with $a=1$) for the trace of the powers of the matrix and the coefficients of the characteristic polynomial, to obtain that $\forall \phi \in \H, \scal{\phi,\J([D,f]+ i\gamma) \phi } \leq 0$ if and only if $\alpha \leq 0$ and $\beta \geq 0$ on every point of $\M$, which is equivalent to have $\sup g( \nabla f, \nabla f ) +1 \leq 0$ with past-directed gradient of $f$.
\end{proof}

\vspace{0.2cm}

Now let us define a new distance function using this new algebraic constraint.

\vspace{0.2cm}

\begin{defn}\label{defnewdist}
Let $(\A,\widetilde\A,\H,D,\J)$ be an even commutative Lorentzian spectral triple as in Definition \ref{commlost}, for every two points $p,q\in\M$ we define:
\[
\widetilde d(p,q) := \inf_{f \in C^\infty(\M,\setR)}\set{ \max\set{0,f(q)-f(p)} :
\forall \phi \in \H, \scal{\phi,\J([D,f]+ i\gamma) \phi } \leq 0} \cdot
\]
\end{defn}

\vspace{0.2cm}

\begin{prop}
The function $\widetilde d(p,q)$ from Definition \ref{defnewdist} respects all the properties of a Lorentzian distance.
\end{prop}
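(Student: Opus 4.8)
The plan is to reduce the whole statement to two ingredients already at hand: Theorem \ref{distfthm}, which turns the algebraic constraint into the geometric one, and the elementary structure of the functional $f\mapsto\max\set{0,f(q)-f(p)}$, which is built from the \emph{linear} map $f\mapsto f(q)-f(p)$ composed with $\max\set{0,\cdot}$. First I would rewrite the admissible set: by Theorem \ref{distfthm} the family
\[
\F=\set{f\in C^\infty(\M,\setR):\forall\phi\in\H,\ \scal{\phi,\J([D,f]+i\gamma)\phi}\leq 0}
\]
coincides exactly with the set of smooth functions satisfying $\sup g(\nabla f,\nabla f)\leq -1$ with past-directed gradient. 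Two things matter here: $\F$ does not depend on the chosen pair of points, and $\F\neq\emptyset$. The latter follows from global hyperbolicity: by the results of Bernal and S\'anchez \cite{BS04,BS06}, $\M$ carries a smooth temporal function whose gradient is everywhere past-directed and which can be taken \emph{steep}, i.e.\ $g(\nabla T,\nabla T)\leq -1$, so $T\in\F$. Consequently $\widetilde d(p,q)=\inf_{f\in\F}\max\set{0,f(q)-f(p)}$ is a well-defined element of $[0,+\infty)$, and it coincides with the right-hand side of the inequality in Theorem \ref{gec}.

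With this reformulation the first two properties are immediate: every term $\max\set{0,f(q)-f(p)}$ is nonnegative, giving $\widetilde d(p,q)\geq 0$; and since $\max\set{0,f(p)-f(p)}=0$ for every $f\in\F$ while $\F$ is nonempty, the infimum defining $\widetilde d(p,p)$ equals $0$.

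The last two properties both rest on a single observation: if $\widetilde d(p,q)>0$, then because the infimum lower-bounds every admissible term we have $\max\set{0,f(q)-f(p)}\geq\widetilde d(p,q)>0$ for \emph{every} $f\in\F$, which forces $f(q)-f(p)\geq\widetilde d(p,q)$ for every such $f$. For property (3), this yields $f(p)-f(q)\leq 0$, hence $\max\set{0,f(p)-f(q)}=0$ for all $f\in\F$, so $\widetilde d(q,p)=0$. For the wrong-way triangle inequality (4), assuming $\widetilde d(p,q)>0$ and $\widetilde d(q,r)>0$, I would add the two resulting pointwise inequalities and use the exact additivity $f(r)-f(p)=(f(r)-f(q))+(f(q)-f(p))$ to get $f(r)-f(p)\geq\widetilde d(p,q)+\widetilde d(q,r)$ for every $f\in\F$; taking the infimum over $f$ then gives $\widetilde d(p,r)\geq\widetilde d(p,q)+\widetilde d(q,r)$.

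The only genuinely substantive inputs are Theorem \ref{distfthm} (already proved), which makes the admissible family transparent, and the nonemptiness of $\F$ coming from global hyperbolicity; the remainder is the routine ``a strictly positive infimum forces a uniform per-function bound'' manipulation. Accordingly I expect no real obstacle beyond stating carefully the passage from infimum-positivity to the per-function inequality $f(q)-f(p)\geq\widetilde d(p,q)$, which is exactly the step that powers both the antisymmetry and the reverse triangle inequality.
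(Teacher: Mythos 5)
Your proof is correct and follows essentially the same route as the paper's: properties (1) and (2) are immediate, and (3) and (4) both rest on the observation that a strictly positive infimum forces $f(q)-f(p)>0$ for every admissible $f$, after which the additivity $f(r)-f(p)=(f(r)-f(q))+(f(q)-f(p))$ and the superadditivity of the infimum give the wrong-way triangle inequality. Your explicit attention to the nonemptiness of the admissible set (which the paper leaves implicit, yet which is needed even for $\widetilde d(p,p)=0$) is a welcome refinement, though the existence of a \emph{steep} temporal function with $g(\nabla T,\nabla T)\leq -1$ on a general globally hyperbolic manifold goes beyond the cited Bernal--S\'anchez splitting results and requires a separate argument.
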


\begin{proof}	
We have trivially that $\widetilde d(p,p) = 0$ and $\widetilde d(p,q) \geq 0$ for all $p,q\in\M$. If $\widetilde d(p,q) > 0 $, then $\widetilde d(q,p) = 0 $, since for every function $f$, $\max\set{0,f(q)-f(p)} > 0$ $\implie$ $\max\set{0,f(p)-f(q)} = 0$.

The wrong way triangle inequality is valid since, for all $p,q,r$ such that $\widetilde d(p,q)>0$ and $\widetilde d(q,r)>0$, we have:
\begin{align*}
\widetilde d(p,r) &= \inf\set{ \max\set{0,f(r)-f(p)} : \forall \phi \in \H, \scal{\phi,\J([D,f]+ i\gamma) \phi } \leq 0}\\
&= \inf\{  (f(r)-f(q)) + (f(q)-f(p))  : \forall \phi \in \H, \scal{\phi,\J([D,f]+ i\gamma) \phi } \leq 0 \}\\
&\quad(\text{since }f(r)-f(q) > 0 \text{ and } f(q)-f(p) > 0)\\
&\geq \inf\set{   f(r)-f(q) \ : \forall \phi \in \H, \scal{\phi,\J([D,f]+ i\gamma) \phi } \leq 0}\\
&\quad +\ \inf\set{  f(q)-f(p)  \ : \forall \phi \in \H, \scal{\phi,\J([D,f]+ i\gamma) \phi } \leq 0 }\\
&= \widetilde d(p,q) + \widetilde d(q,r). 
\end{align*} 
\end{proof}

\vspace{0.5cm}

\begin{prop}
For every $p,q\in \M$, we have $d(p,q) \leq \widetilde d(p,q)$.
\end{prop}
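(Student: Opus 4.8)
The plan is to observe that the infimum defining $\widetilde d(p,q)$ in Definition \ref{defnewdist} is, term by term, the very same infimum that appears on the right-hand side of Theorem \ref{gec}, once the algebraic constraint is rewritten in geometric language. The whole statement then reduces to an application of Theorem \ref{gec}, with Theorem \ref{distfthm} supplying the dictionary between the two formulations of the admissibility condition.

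First I would invoke Theorem \ref{distfthm}. Since the spectral triple is an even commutative one as in Definition \ref{commlost}, that theorem tells us that for each $f\in C^\infty(\M,\setR)$ the algebraic condition $\forall \phi \in \H,\ \scal{\phi,\J([D,f]+ i\gamma) \phi } \leq 0$ holds \emph{if and only if} $\sup g(\nabla f, \nabla f) \leq -1$ and $\nabla f$ is past-directed. This equivalence identifies the set of admissible functions $f$ in the infimum defining $\widetilde d(p,q)$ with the set of admissible functions in the variational expression of Theorem \ref{gec}. As both infima carry the same objective $\max\set{0,f(q)-f(p)}$, they are equal:
\[
\widetilde d(p,q) = \inf\set{ \max\set{0,f(q)-f(p)} :\ f \in C^\infty(\M,\setR),\ \sup g( \nabla f, \nabla f ) \leq -1,\ \nabla f \text{ past-directed}}.
\]
Theorem \ref{gec} bounds $d(p,q)$ above by exactly this quantity, so $d(p,q) \leq \widetilde d(p,q)$, which is the claim.

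The only items requiring verification — and these are bookkeeping rather than genuine obstacles — are that the hypotheses of Theorem \ref{gec} are in force and that the objects entering the algebraic constraint are available. The manifold $\M$ is oriented and time-oriented: time-orientability is part of global hyperbolicity, while orientability is built into the existence of the spin structure $S$ underlying Definition \ref{commlost}. Moreover the chirality element $\gamma$ appearing in the constraint exists precisely because the spectral triple is assumed even, which is the standing hypothesis of Definition \ref{defnewdist}. With these compatibilities confirmed, the inequality follows with no further computation.

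I do not expect any hard step here: the substantive work has already been done in Theorem \ref{causalfthm}, its extension Theorem \ref{distfthm}, and the classical estimate of Theorem \ref{gec}. The present proposition is the synthesis that converts the purely algebraic constraint of Definition \ref{defnewdist} into the geometric constraint for which the upper bound on the Lorentzian distance is known, thereby transporting the estimate $d \leq \widetilde d$ to the spectral-triple setting.
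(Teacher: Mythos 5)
Your proposal is correct and follows exactly the paper's own argument: the paper's proof is the one-line observation that the claim is a direct consequence of Theorem \ref{gec} combined with Theorem \ref{distfthm}, which is precisely the synthesis you spell out. The additional verifications you mention (orientability from the spin structure, time-orientability from global hyperbolicity, the existence of $\gamma$ from evenness) are sound but left implicit in the paper.
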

\begin{proof}
This is a simple consequence of Theorem \ref{gec} and  Theorem \ref{distfthm}.
\end{proof}

\vspace{0.2cm}

One can notice that this proposition implies that $\widetilde d(p,q)>0$ only if $d(p,q)>0$, so only if $p\precc q$.

However, those results are not sufficient to prove that, in the general case, the two distances coincide. This is related to the fact that our algebraic constraint is only well defined for smooth functions while the only proof of the equality case existing at this time relies on non-smooth causal functions \cite{F3}. Nevertheless, the equality case can easily be obtained if the distance between two points can be determined by using only smooth functions, as it is the case for the Minkowski spacetime.

\vspace{0.3cm}

\begin{prop}
Let us consider the following commutative spectral triple on Minkowski spacetime with even dimension $n$:
\begin{itemize}
\item $\H = L^2(\setR^{1,n-1}) \otimes \setC^{2^{\lfloor{n/2}\rfloor}}$ is the Hilbert space of square integrable spinor sections over the Minkowski spacetime.
\item $\A = \Sw(\setR^{1,n-1})$ is the algebra of Schwartz functions with pointwise multiplication.
\item $\widetilde\A = \B(\setR^{1,n-1})$ is the algebra of smooth bounded functions with bounded derivatives with pointwise multiplication.
\item $D = -i \gamma^\mu\partial_\mu$ is the flat Dirac operator.
\item $\J=-[D,x^0] = ic(dx^0)  = i\gamma^0$ where $x^0$ is the time coordinate.
\item $\gamma = (-i)^{\frac{n}{2} + 1} \gamma^0 \dots \gamma^{n-1}.$
\end{itemize}
Then, the function
\[
\widetilde d(p,q) = \inf_{f \in C^\infty(\setR^{1,n-1},\setR)}  \set{ \max\set{0,f(q)-f(p)} :
\forall \phi \in \H, \scal{\phi,\J([D,f]+ i\gamma) \phi } \leq 0}
\]
corresponds to the usual Lorentzian distance.
\end{prop}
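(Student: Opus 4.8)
The plan is to prove the two inequalities $d(p,q) \le \widetilde d(p,q)$ and $\widetilde d(p,q) \le d(p,q)$ separately. The first is exactly the content of the preceding proposition (a consequence of Theorem \ref{gec} together with Theorem \ref{distfthm}), so all the work lies in the reverse bound $\widetilde d(p,q) \le d(p,q)$. By Theorem \ref{distfthm} the algebraic condition $\forall\phi\in\H,\ \scal{\phi,\J([D,f]+i\gamma)\phi}\le 0$ defining $\widetilde d$ is equivalent, for a smooth real function $f$ on Minkowski spacetime, to the pair of geometric conditions $\sup g(\nabla f,\nabla f)\le -1$ and ``$\nabla f$ past-directed''. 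It therefore suffices to exhibit, for each pair $p,q$, smooth functions $f$ obeying these conditions with $f(q)-f(p)$ equal to, or tending to, $d(p,q)$; note that the infimum defining $\widetilde d$ ranges over all of $C^\infty(\setR^{1,n-1},\setR)$, so we are free to use unbounded functions.

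The key idea, special to the flat case, is that affine functions already saturate the constraint. For a future-directed unit timelike vector $u$ (so $g(u,u)=-1$) I would take $f_u(x) = -g(u,x)$. This is smooth, its gradient $\nabla f_u = -u$ is constant, past-directed timelike, and satisfies $g(\nabla f_u,\nabla f_u) = -1$, so $f_u$ meets the constraint exactly; since the commutator $[D,f_u]=-i\,c(df_u)$ is then a constant bounded operator, the algebraic condition is well defined. Because $f_u(q)-f_u(p) = -g(u,q-p)$, the whole problem collapses to a single computation in one tangent space.

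I would then distinguish cases according to the causal type of $q-p$. When $q-p$ is future-directed timelike I set $u_0 = (q-p)/d(p,q)$, which is future unit timelike, giving $f_{u_0}(q)-f_{u_0}(p) = -g(u_0,q-p) = d(p,q)$ and hence $\widetilde d(p,q)\le d(p,q)$; combined with the reverse inequality this yields the exact equality. In every other case $d(p,q)=0$ and I only need $\widetilde d(p,q)=0$: for $q-p$ spacelike I choose a future unit timelike $u$ orthogonal to $q-p$ (possible since the orthogonal complement of a spacelike vector is a Lorentzian subspace and so contains timelike directions), so that $f_u(q)-f_u(p)=0$; for $q\preceq p$ one has $-g(u,q-p)\le 0$ for every future timelike $u$, whence $\max\set{0,f_u(q)-f_u(p)}=0$; and for $q-p$ future-directed null I would use a family of increasingly boosted unit timelike vectors $u$ tending to the null direction, for which $-g(u,q-p)\to 0^{+}$, forcing the infimum to be $0$.

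The hard part -- and the reason the statement is restricted to Minkowski -- is a regularity issue rather than a computational one: here the distance-realizing ``time function'' is affine, hence smooth, whereas on a general globally hyperbolic manifold the natural candidate $x\mapsto -d(x,q)$ is in general only Lipschitz and therefore cannot be inserted into the smooth algebraic constraint. Since no such loss of regularity occurs in the flat case, the infimum over smooth functions already attains (or, in the null case, approaches) the Lorentzian distance, and combining the two inequalities gives $\widetilde d(p,q)=d(p,q)$.
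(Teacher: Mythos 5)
Your proof is correct and its core coincides with the paper's: both sides establish $d(p,q)\le\widetilde d(p,q)$ by citing the earlier proposition (Theorem \ref{gec} plus Theorem \ref{distfthm}), and both prove the reverse bound in the chronological case with the \emph{same} affine function --- your $f_{u_0}(x)=-g(u_0,x)$ with $u_0=(q-p)/d(p,q)$ is exactly the paper's $\cosh\theta\,x_0-\sinh\theta\sum_i\frac{d_i}{d}x_i$ written invariantly. Where you genuinely diverge is in the degenerate cases with $d(p,q)=0$. The paper treats $p\preceq q$ but not $p\precc q$ by approximating $q$ with points $q_k$ in its chronological future and invoking the wrong-way triangle inequality established earlier for $\widetilde d$, and treats $p\npreceq q$ by a geometric argument locating a point $r\in I^+(q)$ on the boundary of $J^+(p)$ and comparing values of near-optimal functions at $p$, $r$ and $q$. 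You instead exhibit explicit affine witnesses in every case: a $u$ orthogonal to $q-p$ when $q-p$ is spacelike, a sign observation when $q\preceq p$, and a boosted family of unit timelike vectors when $q-p$ is future null. Your route is more self-contained and elementary --- it reduces everything to linear algebra in a single tangent space and avoids both the triangle inequality for $\widetilde d$ and the boundary-point argument --- at the cost of a slightly finer case split on the causal type of $q-p$, which you do cover exhaustively (the case $p=q$ being trivial). Both arguments are sound, and your closing remark correctly identifies smoothness of the distance-realizing functions as the obstruction to extending the result beyond the flat case, which matches the paper's own discussion.
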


\vspace{0.2cm}

\begin{proof}
Let $p,q\in\setR^{1,n-1}$ be two points such that $p \precc q$. We want to show that there exists a function $f\in C^\infty(\setR^{1,n-1},\setR)$ respecting $\sup g( \nabla f, \nabla f ) \leq -1$ with past-directed gradient such that $f(q)-f(p) = d(p,q)$. Since the Minkowski spacetime is invariant by translation and that the conditions on the gradient of $f$ are also invariant by translation, we can assume for simplicity that $p$ is at the origin. Let us suppose that $q=(q_0,q_1,\dots,q_{n-1})$ in cartesian coordinates. Since the maximal geodesics of the Minkowski spacetime are given by straight lines, so we have $d(p,q) = q_0^2 - \sum_{i=1}^{n-1}q_i^2 = d > 0$. We can notice that, since $p \precc q$ with $p$ at the origin, the coordinate $q_0$ is always positive. The point $q$ can be described by using hyperbolic polar coordinates: 
\[
q=(d\cosh\theta,d_1\sinh\theta,d_2\sinh\theta,\dots,d_{n-1}\sinh\theta),
\]
where $d=\sqrt{\sum_{i=1}^{n-1}d_i^2}$ and $\tanh\theta=\frac{\sqrt{ \sum_{i=1}^{n-1}q_i^2}}{q_0}$.

Let us define the following affine function:
\[
f(x_0,\dots,x_{n-1})= \cosh\theta \,x_0 - \sinh\theta \sum_{i=1}^{n-1}\frac{d_i}{d} x_i. 
\]
Clearly, $f$ is smooth and 
\[
g( \nabla f, \nabla f ) = -\cosh^2\theta + \sinh^2\theta \frac{\sum_{i=1}^{n-1}d_i^2}{d^2} = -1,
\]
with past-directed gradient since, from $\cosh\theta > 0$, $f$ is increasing along the time coordinate. Moreover,
\[
f(q) - f(p) = f(q) = d \cosh^2\theta - \sinh^2\theta \frac{\sum_{i=1}^{n-1}d_i^2}{d} = d = d(p,q).
\]

So for every pair of points $p,q$ such that $p \precc q$, we have $d(p,q)=\widetilde d(p,q)$. If we assume that $p \preceq q$ and $p \not\!\!\precc q$ (which implies $d(p,q)=0$), then there exists a sequence of points $q_k \rightarrow q$ such that $q \precc q_k$, $0 < d(q,q_k) = \widetilde d(q,q_k) \rightarrow 0$ and $0 <  d(p,q_k) = \widetilde d(p,q_k) \rightarrow 0$. If we suppose $\widetilde d(p,q)>0$, then the wrong way triangle inequality gives $0 <  \widetilde d(p,q) \leq \widetilde d(p,q_k) - \widetilde d(q,q_k) \rightarrow 0$, so $\widetilde d(p,q)=0$.

It remains to show that the formula is null if $p \npreceq q$. We will do it by geometrical considerations. On the Minkowski spacetime, the intersection $J^+(p) \cap I^+(q)$ cannot be empty, and since $I^+(q) \not\subset J^+(p)$ there must exist a point $r\in I^+(q)$ in the boundary of $J^+(p)$, i.e.~such that $r\in J^+(p) \setminus I^+(p)$. Then we have $p \preceq r$ with $d(p,r)=\widetilde d(p,r)=0$ and for every $\epsilon>0$ there exists a function $f$ respecting $g( \nabla f, \nabla f )=-1$ with past-directed gradient and such that $f(r)-f(p) < \epsilon$. Since we have $d(q,r)>0$, we must have $f(q) < f(r) < f(p) + \epsilon$, so $f(q)-f(p) < \epsilon$ and $\widetilde d(p,q)=0$.
\end{proof}

\vspace{0.5cm}

We want to conclude this section by making a remark that the generalization of the Lorentzian distance formula given in Definition \ref{defnewdist} to noncommutative spacetimes is not straightforward. Indeed, this formula requires an extension to some *-algebra of unbounded elements, which is not included at this time in the definition of Lorentzian spectral triples.\\

\vspace{1cm}

\section{Outlook}

We have shown that a partial order structure within the space of states can be defined for a Lorentzian spectral triple. This partial order structure corresponds to the usual causal structure when the Lorentzian spectral triple is built from a globally hyperbolic manifold. This definition opens the door to various new possibility of introducing causal aspects in noncommutative geometry, with possible physical applications for example in noncommutative quantum field theory.

This definition raises also some new questions. For example, we have seen that, in the commutative case, we only need the subset $\M(\widetilde\A)$ of the space of all states to recover the causal structure. The passage from $P(\widetilde\A)$ to $\M(\widetilde\A)$ allows to recover the initial noncompact manifold instead of its compactification. On the other hand, the causality relation (Definition \ref{defcc}) is well defined for all states in $S(\widetilde\A)$. In the commutative case this observation is obsolete (see Remark \ref{Sremark}) and does not provide us any new information about the causal structure. However, in the noncommutative regime the three causal futures covered by Definition \ref{Jdef} might be of very different nature even in the case of simple deformations like the mentioned Moyal-Minkowski spacetimes. Since in the noncommutative regime the notion of points becomes meaningless, causality would also become a non-local quantity. The investigation of conceptual consequences of the proposed definition of causal structure for noncommutative Lorentzian spectral triples is the aim of our future work.

Also, with the algebraization of the global constraint for the Lorentzian distance formula in even dimensions, we have added a new step in the construction of such formula. The future work on this task will address the problem of obtaining a correspondence with the usual distance formula for general globally hyperbolic spacetimes. Moreover, it would require setting a correct background for Lorentzian spectral triples in order to extend this formula to noncommutative spacetimes.\\

\vspace{0.7cm}

\section*{Acknowledgement}

The authors would like to thank Andrzej Sitarz for his useful suggestions. This work was supported by a grant from the John Templeton Foundation.\\

\vspace{1cm}

\appendix
\section*{Appendix: The characteristic polynomial of $\J[D,f]$}
\setcounter{section}{1}

In this Appendix, we compute the necessary formulas for our main proofs, using some techniques of spin geometry and combinatorics.

We suppose that $f\in C^1(\M,\setR)$ where $\M$ is a $n$-dimensional globally hyperbolic Lorentzian spin manifold. Let us set $\alpha = f^{,0} = g^{0\mu}f_{,\mu}$, $\beta = g^{00} (f_{,\mu}f^{,\mu}) = g^{00} g^{\mu\nu} f_{,\mu} f_{,\nu}$ and define $d=2^{\lfloor{n/2}\rfloor}$.

\begin{lemma}\label{tmtr}
For $1 \leq k\leq d$, we have
\[
a_k = \tr\left[(\gamma^0 \gamma^{\mu} f_{,\mu})^k\right] =  \frac d2 \sum_{j=0}^{\lfloor \frac{k}{2} \rfloor} (-1)^j  \, \frac{k}{k-j} \, \binom{k-j}{j} (2\alpha)^{k-2j} \beta^j.
\]
\end{lemma}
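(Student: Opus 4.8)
The plan is to exploit the fact that the matrix $M := \gamma^0 \gamma^\mu f_{,\mu}$ satisfies a \emph{quadratic} relation, so that all of its powers, and hence all of the traces $a_k$, are governed by a single two-term linear recursion whose solution is exactly the stated formula.

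The main computation, and the only place where the Clifford structure enters, is the identity $M^2 = 2\alpha\, M - \beta\, 1$. To obtain it I would write $M^2 = f_{,\mu}f_{,\nu}\,\gamma^0\gamma^\mu\gamma^0\gamma^\nu$ and use the anticommutation relation in the form $\gamma^\mu\gamma^0 = 2g^{\mu0} - \gamma^0\gamma^\mu$ to move the inner $\gamma^0$ to the left. One resulting term carries $2 g^{\mu0}f_{,\mu} = 2\alpha$ multiplying the leftover factor $\gamma^0\gamma^\nu f_{,\nu} = M$, while the other carries $-(\gamma^0)^2(\gamma^\mu f_{,\mu})^2 = -g^{00}\,f_{,\mu}f^{,\mu} = -\beta$, where $(\gamma^\mu f_{,\mu})^2 = g^{\mu\nu}f_{,\mu}f_{,\nu}$ because only the symmetric part of $\gamma^\mu\gamma^\nu$ survives contraction against $f_{,\mu}f_{,\nu}$.

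Next I would record the two base traces $a_0 = \tr 1 = d$ and $a_1 = \tr M = f_{,\mu}\tr(\gamma^0\gamma^\mu) = d\,g^{0\mu}f_{,\mu} = d\alpha$, using the standard trace $\tr(\gamma^\mu\gamma^\nu) = d\,g^{\mu\nu}$. Multiplying the quadratic relation by $M^{k-2}$ gives $M^k = 2\alpha\, M^{k-1} - \beta\, M^{k-2}$ for $k\geq 2$, and taking traces yields the linear recursion $a_k = 2\alpha\, a_{k-1} - \beta\, a_{k-2}$. It then remains to check that the proposed closed form solves this recursion with the above initial data. Writing $c(k,j) = \frac{k}{k-j}\binom{k-j}{j}$, the recursion for $a_k$ translates, after matching coefficients of $(-1)^j(2\alpha)^{k-2j}\beta^j$, into the Pascal-type identity $c(k,j) = c(k-1,j) + c(k-2,j-1)$, which follows directly from the factorial expression; an induction on $k$ then closes the argument. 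Equivalently, one may diagonalize $M$: its eigenvalues are the roots $\lambda_\pm = \alpha \pm \sqrt{\alpha^2-\beta}$ of $x^2 - 2\alpha x + \beta$, each of multiplicity $d/2$ (forced by $a_0 = d$ and $a_1 = d\alpha$ when $\lambda_+\neq\lambda_-$, and extended by continuity in $\alpha,\beta$ to the degenerate case), so that $a_k = \tfrac d2(\lambda_+^k + \lambda_-^k)$ and the statement becomes Waring's formula for the power sums of a quadratic.

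The main obstacle is the final combinatorial step: verifying the recursion $c(k,j) = c(k-1,j) + c(k-2,j-1)$ and, above all, the careful bookkeeping of the floor-function summation limits so that the boundary terms ($j=0$, $j=\lfloor k/2\rfloor$, and the cases $k=0,1$) line up correctly in the induction. Everything before that, including the quadratic relation and the passage to the recursion for $a_k$, is short and routine.
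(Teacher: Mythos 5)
Your proof is correct and follows the same overall strategy as the paper's: both reduce everything to the two-term recursion $a_k = 2\alpha\, a_{k-1} - \beta\, a_{k-2}$ with $a_0=d$, $a_1=d\alpha$, and then identify the closed form. You differ in two places, both defensible and arguably cleaner. First, you derive the recursion from the operator identity $M^2 = 2\alpha M - \beta\,1$ for $M=\gamma^0\gamma^\mu f_{,\mu}$, whereas the paper performs essentially the same anticommutation manipulation under the trace; your version is a strictly stronger statement (it pins down the minimal polynomial of $M$, which is exactly what justifies your eigenvalue-multiplicity argument in the alternative route), and the one-line Clifford computation behind it is correct. Second, the paper solves the recursion explicitly as $a_k=\tfrac d2\bigl(\lambda_+^k+\lambda_-^k\bigr)$ with $\lambda_\pm=\alpha\pm\sqrt{\alpha^2-\beta}$ and then needs a binomial expansion together with the identity $\sum_i\binom{k}{2i}\binom{i}{j}=\frac{k}{k-j}2^{k-2j-1}\binom{k-j}{j}$ to land on the stated coefficients; you instead verify the closed form by induction via the Pascal-type identity $c(k,j)=c(k-1,j)+c(k-2,j-1)$ for $c(k,j)=\frac{k}{k-j}\binom{k-j}{j}$, which reduces to the factorial computation $(k-1)(k-2j)+(k-2)j=k(k-j-1)$ and sidesteps the paper's combinatorial lemma entirely. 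The only point genuinely requiring care, which you correctly flag, is the base of the induction: the closed form is asserted only for $k\geq 1$, so the step at $k=2$ must feed in $a_0=d$ directly (equivalently, adopt the Waring convention that the degenerate $j=k/2$ coefficient equals $2$); with the convention $\binom{a}{b}=0$ for $b<0$ or $b>a$, the boundary terms $j=0$ and $j=\lfloor k/2\rfloor$ then check out, so there is no gap.
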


\begin{proof}
Let us first set $a_0 = \tr(1) = d$ and compute $a_1$. We have
\[
a_1 = \tr (\gamma^0 \gamma^{\mu}) f_{,\mu} = g^{0\mu} \tr (1) f_{,\mu} = d  \alpha.
\]

Now take $k \geq 2$, then:
\begin{align*}
a_k & = \tr (\gamma^0 \gamma^{\mu} f_{,\mu})^k = \tr \gamma^0 \gamma^{\nu} \gamma^0 \gamma^{\rho} (\gamma^0 \gamma^{\mu} f_{,\mu})^{k-2} f_{,\nu} f_{,\rho} \\
& = 2 g^{0\nu} f_{,\nu} \tr \gamma^0 \gamma^{\rho} (\gamma^0 \gamma^{\mu} f_{,\mu})^{k-2} f_{,\rho} - \tr \gamma^0 \gamma^0 \gamma^{\nu} \gamma^{\rho} (\gamma^0 \gamma^{\mu} f_{,\mu})^{k-2} f_{,\nu} f_{,\rho}  \\
& = 2 \alpha \tr (\gamma^0 \gamma^{\mu} f_{,\mu})^{k-1} - g^{00} \tr \gamma^{\nu} \gamma^{\rho} (\gamma^0 \gamma^{\mu} f_{,\mu})^{k-2} f_{,\nu} f_{,\rho} \\
& = 2 \alpha a_{k-1} - g^{00} \tfrac{1}{2} \tr \{\gamma^{\nu},\gamma^{\rho}\} (\gamma^0 \gamma^{\mu} f_{,\mu})^{k-2} f_{,\nu} f_{,\rho} \\
& = 2 \alpha a_{k-1} - g^{00} g^{\nu\rho} f_{,\nu} f_{,\rho} \tr (\gamma^0 \gamma^{\mu} f_{,\mu})^{k-2} \\
& = 2 \alpha a_{k-1} - \beta a_{k-2}.
\end{align*}
We have extensively used the anticommutation rules $\{\gamma^\mu,\gamma^\nu\}=2g^{\mu\nu}$ and, in the line 4, we have taken advantage of the symmetry of exchanging the indices $\mu \leftrightarrow \rho$.

The resulting linear recurrence relation
\[
a_k = 2 \alpha a_{k-1} - \beta a_{k-2}, \qquad a_0 = d, \qquad a_1 = d  \alpha
\]
has the characteristic equation $X^2 - 2 \alpha X + \beta = 0$ whose roots are $\alpha \pm \sqrt{\alpha^2-\beta}$. This leads to the general formula

\begin{equation}\label{eqrec1}
a_k =  \tr\left[(\gamma^0 \gamma^{\mu} f_{,\mu})^k\right]  =\frac{d}{2} \left[ \left(\alpha - \sqrt{\alpha^2-\beta}\right)^k + \left(\alpha + \sqrt{\alpha^2-\beta}\right)^k \right]\cdot
\end{equation}

The reader might be suspicious about the $\sqrt{\alpha^2-\beta}$, which might, \textit{a priori}, be a complex number. However, the following computation shows that \eqref{eqrec1} is always real.

By use of the Binomial Theorem, we get
\[
a_k \ =\  d \sum_{i=0}^{\lfloor \frac{k}{2} \rfloor} \sum_{j = 0}^{i} (-1)^j \binom{k}{2 i} \binom{i}{j} \alpha^{k-2j} \beta^j 
\]
and if $k\geq 1$, by use of the following combinatorial relation
\[
\sum_{i=0}^{\lfloor \frac{k}{2} \rfloor} \binom{k}{2 i} \binom{i}{j} = \frac{k}{k-j} 2^{k-2j-1} \binom{k-j}{j} \quad\text{for}\quad j\leq {\lfloor \frac{k}{2} \rfloor},
\]
we get
\[
a_k \ =  \frac d2 \sum_{j=0}^{\lfloor \frac{k}{2} \rfloor} (-1)^j 2^{k-2j} \, \frac{k}{k-j} \, \binom{k-j}{j} \alpha^{k-2j} \beta^j.
\]
\end{proof}

\vspace{0.5cm}

\begin{lemma}\label{coefappendix}
The coefficients of the characteristic polynomial of the matrix $\gamma^0 \gamma^{\mu} f_{,\mu}$ read
\begin{equation}\label{calculcoef}
c_k = (-1)^k \sum_{i = 0}^{\lfloor \frac{k}{2} \rfloor} \binom{\frac d2}{k-i}  \binom{k-i}{i}  (2\alpha)^{k-2i} \beta^i  \quad \text{ for }\quad  0 \leq k \leq d.
\end{equation}

\end{lemma}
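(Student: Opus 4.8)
The plan is to reduce the combinatorial statement to a single clean algebraic identity for the matrix $M := \gamma^0\gamma^\mu f_{,\mu}$ and then expand it by the multinomial theorem. First I would observe that the very same Clifford-algebra manipulation used in the proof of Lemma \ref{tmtr} yields, at the level of matrices rather than traces, the quadratic relation
\[
M^2 = \gamma^0\gamma^\nu\gamma^0\gamma^\rho f_{,\nu}f_{,\rho} = 2\alpha\, M - \beta\, 1,
\]
obtained by writing $\gamma^0\gamma^\nu\gamma^0 = 2g^{0\nu}\gamma^0 - g^{00}\gamma^\nu$ and using $\{\gamma^\nu,\gamma^\rho\}=2g^{\nu\rho}$ together with the definitions $\alpha=g^{0\mu}f_{,\mu}$ and $\beta=g^{00}g^{\mu\nu}f_{,\mu}f_{,\nu}$. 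Thus $M$ annihilates the polynomial $X^2-2\alpha X+\beta$, whose roots are $r_\pm=\alpha\pm\sqrt{\alpha^2-\beta}$.

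Second, I would pin down the characteristic polynomial of $M$. Since $M$ satisfies a degree-two polynomial, every eigenvalue lies in $\set{r_+,r_-}$; let $m_\pm$ be their multiplicities, so that $m_++m_-=d$. From Lemma \ref{tmtr} in the case $k=1$ we have $\tr M = d\alpha = \tfrac d2(r_++r_-)$, whereas also $\tr M = m_+r_++m_-r_-$; subtracting gives $(m_+-\tfrac d2)(r_+-r_-)=0$. When $\alpha^2\neq\beta$ the roots are distinct, forcing $m_+=m_-=\tfrac d2$; when $\alpha^2=\beta$ the single root $\alpha$ carries multiplicity $d$. In either case
\[
\det(\lambda 1-M) = (\lambda-r_+)^{d/2}(\lambda-r_-)^{d/2} = \big(\lambda^2-2\alpha\lambda+\beta\big)^{d/2}.
\]
This factorization is also visible directly from formula \eqref{eqrec1}, which exhibits the power sums $a_k$ as those of the multiset $\set{r_+,r_-}$ with each root counted $\tfrac d2$ times.

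Third, I would simply expand. By the multinomial theorem,
\[
\big(\lambda^2-2\alpha\lambda+\beta\big)^{d/2} = \sum_{a+b+c=d/2}\frac{(d/2)!}{a!\,b!\,c!}\,(-2\alpha)^{b}\,\beta^{c}\,\lambda^{2a+b},
\]
and the coefficient $c_k$ of $\lambda^{d-k}$ collects the terms with $2a+b=d-k$. Setting $i:=c$ and $b=k-2i$ (hence $a=\tfrac d2-k+i$), each surviving term contributes $(-2\alpha)^{k-2i}=(-1)^k(2\alpha)^{k-2i}$, while $i$ ranges over $0\le i\le\lfloor k/2\rfloor$, the lower bound $i\ge k-\tfrac d2$ being automatically enforced by the convention $\binom{a}{b}=0$ for $b>a$. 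The last routine step is the purely combinatorial rewriting
\[
\frac{(d/2)!}{(\tfrac d2-k+i)!\,(k-2i)!\,i!} = \binom{d/2}{k-i}\binom{k-i}{i},
\]
which reproduces \eqref{calculcoef} exactly.

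The genuinely delicate points are the second step and the final bookkeeping. In the second step one must argue that both multiplicities equal $\tfrac d2$ (equivalently, that the power sums of Lemma \ref{tmtr} determine the characteristic polynomial uniquely, which is Newton's identities in disguise) and treat the degenerate case $\alpha^2=\beta$ separately; the trace input $\tr M=d\alpha$ is essential here, since the quadratic relation alone fixes only the \emph{set} of eigenvalues, not their multiplicities. In the final step the care is entirely in tracking the summation range and the overall sign $(-1)^k$ through the multinomial expansion.
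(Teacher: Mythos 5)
Your proof is correct, and it takes a genuinely different route from the paper's. The paper's own argument is a direct verification: it takes the power sums $a_k$ computed in Lemma \ref{tmtr}, substitutes the candidate coefficients \eqref{calculcoef} into the Newton recursion $\sum_{p=1}^{k} a_p c_{k-p} = -k c_k$, and checks the resulting combinatorial identity by hand (a change of summation variables, two applications of the Vandermonde identity, and a telescoping sum over $q$). You instead lift the trace recursion of Lemma \ref{tmtr} to the matrix identity $M^2 = 2\alpha M - \beta\,1$, conclude that every eigenvalue lies in $\set{r_\pm}$ with $r_\pm = \alpha \pm \sqrt{\alpha^2-\beta}$, use the single trace $\tr M = d\alpha$ to force both algebraic multiplicities to equal $\tfrac d2$ (with the degenerate case $\alpha^2=\beta$ handled separately, where the minimal polynomial $(X-\alpha)^2$ leaves $\alpha$ as the only eigenvalue), obtain $\det(\lambda 1 - M) = (\lambda^2 - 2\alpha\lambda + \beta)^{d/2}$, and read off \eqref{calculcoef} from the multinomial theorem. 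All the steps check out: the Clifford manipulation $\gamma^0\gamma^\nu\gamma^0 = 2g^{0\nu}\gamma^0 - g^{00}\gamma^\nu$ together with the symmetry of $f_{,\nu}f_{,\rho}$ does give the quadratic relation at the matrix level (not merely under the trace), the multiplicity argument $(m_+ - \tfrac d2)(r_+ - r_-)=0$ is airtight since $d=2^{\lfloor n/2\rfloor}$ is even, and the index bookkeeping $a=\tfrac d2 - k + i$, $b = k-2i$, $c=i$ reproduces both the binomial product and the sign $(-1)^k$. What your approach buys is considerable: it replaces the paper's Vandermonde gymnastics with a two-line expansion, it \emph{explains} why the coefficients have the shape \eqref{calculcoef} (they are literally the multinomial coefficients of $(\lambda^2-2\alpha\lambda+\beta)^{d/2}$), and as a by-product it rederives the closed form \eqref{eqrec1} for the power sums. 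What the paper's approach buys is that it never needs to identify the full characteristic polynomial as a perfect power and so does not require the separate treatment of the multiplicity count or of the degenerate root; but on balance your argument is the cleaner one.
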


\begin{proof}

The Newton's identities allow us to express the coefficients of the characteristic polynomial of a $d$-dimensional matrix $A$ in terms of $a_k \eq \tr A^k$. Let us recall the following formula:
\[
\det(\lambda 1-A)  =  \lambda^n + c_1 \lambda^{d-1} + \ldots + c_{d-1} \lambda + c_d ,
\]
where the coefficients $c_k$ for $n\in\set{1,2,\dots,d}$ can be obtained recursively from the Newton's identities (see e.g.~\cite{Kalman}):
\[
a_k + a_{k-1} c_1 + \ldots + a_1 c_{k-1} + k c_k = 0.
\]
It is also logical to set $c_0=1$.

Since we already know the values of $a_k$ for the matrix $\gamma^0 \gamma^{\mu} f_{,\mu}$ from the Lemma \ref{tmtr}, it is sufficient to prove that, for every $1 \leq k \leq d$, we have
\begin{equation}\label{eqgoal}
\sum_{p=1}^k a_p c_{k-p} = -k c_k,
\end{equation}
where the coefficients $c_k$ are those given in \eqref{calculcoef}.

The LHS of \eqref{eqgoal} can be written as:
\begin{equation}\label{bl}
\sum_{p=1}^k a_p c_{k-p} = \sum_{l=0}^{\lfloor\frac{k}{2}\rfloor} \; b_l \; (2\alpha)^{k-2l} \beta^l.
\end{equation}

The coefficients $b_l$ in \eqref{bl} are obtained by taking the coefficients of $(2\alpha)^{k-2j} \beta^j$ in $a_p$ and the coefficients of $(2\alpha)^{k-2(l-j)} \beta^{(l-j)}$ in $c_{k-p}$ (for $j=0,\dots,\lfloor \frac{p}{2} \rfloor$). We get
\[
b_l = (-1)^{k} \frac d2 \sum_{p=1}^k \sum_{j=0}^{\lfloor \frac{p}{2} \rfloor}  (-1)^{j-p} \frac{p}{p-j} \binom{p-j}{j}  \binom{\frac d2}{k-p-l+j}  \binom{k-p-l+j}{l-j}. 
\]

Let us replace the couple of indices $(p,j)$ by a new one $(q,j)$ where $q = p-j$. Recall that $\binom{a}{b} = 0$ for $b<0$ or $b > a$, so the binomial coefficient $\binom{p-j}{j} $ yields non-null terms only if $2j\leq p$. So $q$ is non-negative and cannot be null since $p\neq 0$. The binomial coefficients $\binom{\frac d2}{k-p-l+j}$, $\binom{k-p-l+j}{l-j}$ imply that effectively we have $q=p-j \leq k-l$ and $j \leq l$, so $q$ goes from $1$ to $k-l$ and $j$ goes from $0$ to $l$, and we obtain
\begin{align}
 b_l &= (-1)^{k} \frac d2 \sum_{q=1}^{k-l} \sum_{j=0}^l  (-1)^{q} \frac{(q+j)}{q} \binom{q}{j}  \binom{\frac d2}{k-l-q}  \binom{k-l-q}{l-j} \nonumber\\
&= (-1)^{k} \frac d2 \sum_{q=1}^{k-l} (-1)^{q}  \binom{\frac d2}{k-l-q}  \left[ \sum_{j=0}^l   \frac{(q+j)}{q} \binom{q}{j}    \binom{k-l-q}{l-j} \right]. \label{eqqj}
\end{align}

\vspace{0.5cm}

The term inside the square brackets of \eqref{eqqj} is independent of $q$. Indeed,
\begin{align*}
&\sum_{j=0}^l   \frac{(q+j)}{q} \binom{q}{j}    \binom{k-l-q}{l-j} \\
&\qquad=   \sum_{j=0}^l   \binom{q}{j}    \binom{k-l-q}{l-j}   +   \sum_{j=0}^l   \frac{j}{q} \binom{q}{j}    \binom{k-l-q}{l-j} \\
&\qquad=   \sum_{j=0}^l   \binom{q}{j}    \binom{k-l-q}{l-j}   + \sum_{j-1=0}^{l-1}   \binom{q-1}{j-1}    \binom{k-l-q}{(l-1)-(j-1)} \\
&\qquad=  \binom{k-l}{l} + \binom{k-l-1}{l-1} \\
&\qquad=   \binom{k-l}{l} + \frac{l}{k(k-l)} \binom{k-l}{l}\\
&\qquad=  \frac{k}{(k-l)} \binom{k-l}{l},
\end{align*}
where we have applied two times the Vandermonde identity between the lines 3 and 4.

\vspace{0.3cm}

If we insert this result into \eqref{eqqj}, we get:

\begin{equation}\label{eqq}
  b_l = (-1)^{k} \frac d2 \frac{k}{k-l} \binom{k-l}{l} \sum_{q=1}^{k-l} (-1)^{q}  \binom{\frac d2}{k-l-q}. 
\end{equation}

\vspace{0.3cm}

The summand of \eqref{eqq} can be split into two parts via the relation
\[
\binom{\frac d2}{k-l-q} =  \binom{\frac d2-1}{k-l-q} +  \binom{\frac d2-1}{k-l-(q+1)}\cdot
\]

\vspace{0.5cm}

Since the sum over $q$ has an alternate sign $(-1)^{q}$, all terms cancel out except the two extrema:
\begin{itemize}
\item For $q=k-l$, the term $(-1)^{q}\binom{\frac d2-1}{k-l-q-1}$ remains but is null since $k-l-q-1=-1$.
\item For $q=1$, the term $ (-1)^{q}\binom{\frac d2-1}{k-l-q}$ remains and is equal to $ -\binom{\frac d2-1}{k-l-1}$.
\end{itemize}

Thus finally we get
\[
b_l = -(-1)^{k} \frac d2 \frac{k}{k-l} \binom{k-l}{l} \binom{\frac d2 -1}{k-l-1} = - k \left[(-1)^{k} \binom{\frac d2 }{k-l} \binom{k-l}{l}\right]
\]
where the term inside the square brackets clearly matches the coefficient of \eqref{calculcoef} for $i=l$.

It follows that the formula \eqref{eqgoal} is valid for the coefficients $c_k$ given in \eqref{calculcoef}, so they indeed correspond to the coefficients of the characteristic polynomial of $\gamma^0 \gamma^{\mu} f_{,\mu}$.
\end{proof}

\vspace{0.5cm}

The following is a generalization of the previous calculations to a matrix of the form $\gamma^0 (\gamma^{\mu} f_{,\mu} - a\gamma) = \gamma^0 \gamma^{\mu} f_{,\mu} - a \gamma^0\gamma$, where $\gamma$ is the chirality element.\\

\begin{lemma}\label{coefdist}
Let us suppose that $n$ is even and let $\gamma=(-i)^{\frac{n}{2} + 1} \gamma^0\cdots\gamma^{n-1}$. We set $\alpha = f^{,0} = g^{0\mu}f_{,\mu}$ and $\beta = g^{00} (f_{,\mu}f^{,\mu} + a^2) = g^{00} g^{\mu\nu} f_{,\mu} f_{,\nu} + g^{00} a^2$ for $a\in\setR$.

For $1 \leq k\leq d$, we have
\[
a_k = \tr\left[(\gamma^0 \gamma^{\mu} f_{,\mu} - a \gamma^0\gamma)^k\right] =  \frac d2 \sum_{j=0}^{\lfloor \frac{k}{2} \rfloor} (-1)^j  \, \frac{k}{k-j} \, \binom{k-j}{j} (2\alpha)^{k-2j} \beta^j
\]
and the coefficients of the characteristic polynomial of the matrix \mbox{$\gamma^0 \gamma^{\mu} f_{,\mu} - a \gamma^0\gamma$} read
\[
c_k = (-1)^k \sum_{i = 0}^{\lfloor \frac{k}{2} \rfloor} \binom{\frac d2}{k-i}  \binom{k-i}{i}  (2\alpha)^{k-2i} \beta^i  \quad \text{ for }\quad 0 \leq k \leq d.
\]

\end{lemma}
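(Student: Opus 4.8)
The plan is to reduce the entire computation to the recurrence already established in the proof of Lemma \ref{tmtr}, by treating the chirality element $\gamma$ as one extra, mutually orthogonal, spacelike gamma matrix. Since $n$ is even we have $\gamma^2 = 1$ and $\set{\gamma,\gamma^\mu} = 0$ for every $\mu$, so the family $\set{\gamma^0,\dots,\gamma^{n-1},\gamma}$ satisfies the Clifford relations $\set{\Gamma^A,\Gamma^B} = 2G^{AB}$ of an $(n+1)$-dimensional Lorentzian space, where $A,B\in\set{0,\dots,n}$, $\Gamma^A=\gamma^A$ for $A<n$, $\Gamma^n=\gamma$, and $G$ is the block metric with $G^{\mu\nu}=g^{\mu\nu}$, $G^{nn}=1$ and $G^{\mu n}=0$. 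Setting $F_{,\mu}=f_{,\mu}$ for $\mu<n$ and $F_{,n}=-a$, the matrix of interest becomes $\gamma^0\gamma^\mu f_{,\mu}-a\,\gamma^0\gamma = \gamma^0\Gamma^A F_{,A}$, which is formally the matrix of Lemma \ref{tmtr} for the metric $G$. Because $n$ is even, $2^{\lfloor (n+1)/2\rfloor}=2^{n/2}=d$, so the representation size, and hence the value $\tr 1 = d$, is unchanged. Note that only the Clifford relations and cyclicity of the trace are needed here, not irreducibility of the extended representation.

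Next I would re-run the computation of the proof of Lemma \ref{tmtr} with $g,\gamma^\mu$ replaced by $G,\Gamma^A$. The initial data are unchanged: $a_0 = \tr 1 = d$, and $a_1 = \tr(\gamma^0\Gamma^A)\,F_{,A} = G^{0A}\,d\,F_{,A} = d\alpha$, where the only new contraction $\tr(\gamma^0\gamma)\,F_{,n}$ vanishes because $\gamma^0\gamma=\tfrac12[\gamma^0,\gamma]$ is a commutator (as $\set{\gamma^0,\gamma}=0$) and is therefore traceless. The recurrence step is identical: using $\gamma^0\Gamma^B\gamma^0 = 2G^{0B}\gamma^0 - G^{00}\Gamma^B$ and the symmetry of $F_{,B}F_{,C}$ to replace $\Gamma^B\Gamma^C$ by $\tfrac12\set{\Gamma^B,\Gamma^C}=G^{BC}$, one obtains
\[
a_k = 2\widetilde\alpha\,a_{k-1} - \widetilde\beta\,a_{k-2},
\]
with $\widetilde\alpha = G^{0A}F_{,A} = g^{0\mu}f_{,\mu} = \alpha$ and $\widetilde\beta = G^{00}G^{AB}F_{,A}F_{,B} = g^{00}(g^{\mu\nu}f_{,\mu}f_{,\nu}+a^2)$, which is exactly the $\beta$ of the present lemma.

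Since the recurrence and the two initial values coincide with those in Lemma \ref{tmtr} (only the meaning of $\beta$ has changed), the closed-form solution for $a_k$ is literally the displayed formula. The coefficients $c_k$ of the characteristic polynomial are then produced from the $a_k$ purely through Newton's identities, exactly as in Lemma \ref{coefappendix}; since that derivation uses only the formula for $a_k$ and the value $d$, it transfers verbatim and yields the stated $c_k$.

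The only genuinely new input, and the step I would treat most carefully, is the bookkeeping for the chirality element: one must check that the extra direction decouples so that no new surviving traces appear in the recurrence. This is guaranteed by $\set{\gamma,\gamma^\mu}=0$ together with $\gamma^2=1$: the cross term $\set{\gamma^\mu f_{,\mu},\,\gamma} = f_{,\mu}\set{\gamma^\mu,\gamma} = 0$, so the sole effect of the $-a\gamma$ term is to add $G^{nn}a^2 = a^2$ to the contraction $g^{\mu\nu}f_{,\mu}f_{,\nu}$, i.e. to shift $\beta$ by $g^{00}a^2$ as claimed.
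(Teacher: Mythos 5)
Your proposal is correct and follows essentially the same route as the paper's own proof: the paper likewise adjoins the chirality element as an extra gamma matrix $\gamma^n=\gamma$ with $f_{,n}=-a$, extends the metric by $g^{nn}=1$, $g^{n\mu}=0$, and then invokes Lemmas \ref{tmtr} and \ref{coefappendix} verbatim for the extended matrix $\gamma^0\gamma^{\tilde\mu}f_{,\tilde\mu}$. Your additional checks (tracelessness of $\gamma^0\gamma$, the dimension count $2^{\lfloor (n+1)/2\rfloor}=d$) are correct and merely make explicit what the paper leaves implicit.
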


\begin{proof}
We abusively define $\gamma^n=\gamma=(-i)^{\frac{n}{2} + 1} \gamma^0\cdots\gamma^{n-1}$ and $f_{,n} = -a$. We use the following extended indices $\tilde\mu,\tilde\nu=0,1,\dots,n$ while the usual indices are $\mu,\nu=0,1,\dots,n-1$. We define $g^{nn}=1$, $g^{n\mu}=g^{\mu n}=0$ such that the anticommutation relation still holds $\{\gamma^{\tilde\mu},\gamma^{\tilde\nu}\} = 2 g^{\tilde\mu\tilde\nu}$.
Then we have,
\begin{align*}
& \gamma^0 \gamma^{\mu} f_{,\mu} - a \gamma^0\gamma = \gamma^0 \gamma^{\mu} f_{,\mu} + \gamma^0 \gamma^n f_{,n} = \gamma^0 \gamma^{\tilde\mu} f_{,\tilde\mu},\\
&g^{0\tilde\mu}f_{,\tilde\mu} = g^{0\mu}f_{,\mu} = \alpha,\\
& g^{00} g^{\tilde\mu\tilde\nu} f_{,\tilde\mu} f_{,\tilde\nu} = g^{00} g^{\mu\nu} f_{,\mu} f_{,\nu} + g^{nn} f_{,n} f_{,n} = g^{00} g^{\mu\nu} f_{,\mu} f_{,\nu} + g^{00} a^2 = \beta.
\end{align*}
The first result is obtained by applying the proof of Lemma \ref{tmtr}, using the extended indices, to the modified matrix $\gamma^0 \gamma^{\tilde\mu} f_{,\tilde\mu}$ with $\alpha=g^{0\tilde\mu}f_{,\tilde\mu}$ and $\beta= g^{00} g^{\tilde\mu\tilde\nu} f_{,\tilde\mu} f_{,\tilde\nu}$. The second result follows directly by applying Lemma \ref{coefappendix} to the modified matrix $\gamma^0 \gamma^{\tilde\mu} f_{,\tilde\mu}$.
\end{proof}

\end{document}